\newcommand{\evenhead}{Author \ name}
\newcommand{\oddhead}{Article \ name}
\newcommand{\theArticleName}{Article name}
\newcommand{\FirstPageHeading}[1]{\thispagestyle{empty}%
\noindent\raisebox{0pt}[0pt][0pt]{\makebox[\textwidth]{\protect\footnotesize \sf }}\par}
\newcommand{\ArticleName}[1]{\renewcommand{\theArticleName}{#1}\vspace{-2mm}\par\noindent {\LARGE\bf  #1\par}}
\newcommand{\Author}[1]{\vspace{5mm}\par\noindent {\Large  #1\par} \par\vspace{2mm}\par}
\newcommand{\Address}[1]{\vspace{2mm}\par\noindent {\it #1} \par}
\newcommand{\Email}[1]{\ifthenelse{\equal{#1}{}}{}{\par\noindent {\rm E-mail: }{\it  #1} \par}}
\newcommand{\URLaddress}[1]{\ifthenelse{\equal{#1}{}}{}{\par\noindent {\rm URL: }{\tt  #1} \par}}
\newcommand{\EmailD}[1]{\ifthenelse{\equal{#1}{}}{}{\par\noindent {$\phantom{\dag}$~\rm E-mail: }{\it  #1} \par}}
\newcommand{\URLaddressD}[1]{\ifthenelse{\equal{#1}{}}{}{\par\noindent {$\phantom{\dag}$~\rm URL: }{\tt  #1} \par}}
\newcommand{\Abstract}[1]{\vspace{6mm}\par\noindent\hspace*{10mm}
\parbox{140mm}{\small {\bf Abstract.} #1}\par}
\newcommand{\Keywords}[1]{\vspace{3mm}\par\noindent\hspace*{10mm}
\parbox{140mm}{\small {\bf Key words:} \rm #1}\par}
\newcommand{\Classification}[1]{\vspace{3mm}\par\noindent\hspace*{10mm}
\parbox{140mm}{\small {\it 2000 Mathematics Subject Classification:} \rm #1}\vspace{3mm}\par}
\newcommand{\ShortArticleName}[1]{\renewcommand{\oddhead}{#1}}
\newcommand{\AuthorNameForHeading}[1]{\renewcommand{\evenhead}{#1}}
\long\def\@makecaption#1#2{%\vskip\abovecaptionskip
  \sbox\@tempboxa{\small \textbf{#1.}\ \ #2}%
  \ifdim \wd\@tempboxa >\hsize
    {\small \textbf{#1.}\ \ #2}\par \else
    \global \@minipagefalse
    \hb@xt@\hsize{\hfil\box\@tempboxa\hfil}%
  \fi \vskip\belowcaptionskip}
\def\numberwithin#1#2{\@ifundefined{c@#1}{\@nocounterr{#1}}{%
  \@ifundefined{c@#2}{\@nocnterr{#2}}{%
  \@addtoreset{#1}{#2}%
  \toks@\@xp\@xp\@xp{\csname the#1\endcsname}%
  \@xp\xdef\csname the#1\endcsname
    {\@xp\@nx\csname the#2\endcsname.\the\toks@}}}}
\def\E^#1{{\buildrel #1 \over\vee}}
\newtheorem{theorem}{Theorem}
\theoremstyle{definition}
\begin{document}

\FirstPageHeading{V.I. Gerasimenko}

\ShortArticleName{Enskog Equation}

\AuthorNameForHeading{I.V. Gapyak, V.I. Gerasimenko}

\ArticleName{On Rigorous Derivation of the Enskog Kinetic Equation}

\Author{I.V. Gapyak$^\ast$\footnote{E-mail: \emph{gapjak@ukr.net}}
        and V.I. Gerasimenko$^\ast$$^\ast$\footnote{E-mail: \emph{gerasym@imath.kiev.ua}}}

\Address{$^\ast$\hspace*{1mm}Taras Shevchenko National University of Kyiv,\\
    \hspace*{3mm}Department of Mechanics and Mathematics,\\
    \hspace*{3mm}2, Academician Glushkov Av.,\\
    \hspace*{3mm}03187, Kyiv, Ukraine}

\Address{$^\ast$$^\ast$Institute of Mathematics of NAS of Ukraine,\\
    \hspace*{4mm}3, Tereshchenkivs'ka Str.,\\
    \hspace*{4mm}01601, Kyiv-4, Ukraine}

\bigskip

\Abstract{We develop a rigorous formalism for the description of the kinetic evolution of infinitely many
hard spheres. On the basis of the kinetic cluster expansions of cumulants of groups of operators of finitely
many hard spheres the nonlinear kinetic Enskog equation and its generalizations are justified. It is
established that for initial states which are specified in terms of one-particle distribution functions
the description of the evolution by the Cauchy problem of the BBGKY hierarchy and by the Cauchy problem of
the generalized Enskog kinetic equation together with a sequence of explicitly defined functionals of a
solution of stated kinetic equation are an equivalent. For the initial-value problem of the generalized Enskog
equation the existence theorem is proved in the space of integrable functions.}

\bigskip

\Keywords{BBGKY hierarchy; Enskog equation; kinetic cluster expansion;
          scattering cumulant; hard spheres.}
\vspace{2pc}
\Classification{35Q20; 35Q82; 47J35; 82C05; 82C40.}

\makeatletter
\renewcommand{\@evenhead}{
\hspace*{-3pt}\raisebox{-15pt}[\headheight][0pt]{\vbox{\hbox to \textwidth {\thepage \hfil \evenhead}\vskip4pt \hrule}}}
\renewcommand{\@oddhead}{
\hspace*{-3pt}\raisebox{-15pt}[\headheight][0pt]{\vbox{\hbox to \textwidth {\oddhead \hfil \thepage}\vskip4pt\hrule}}}
\renewcommand{\@evenfoot}{}
\renewcommand{\@oddfoot}{}
\makeatother

\newpage
\vphantom{math}
\protect\tableofcontents

\vspace{0.5cm}

\section{Introduction}

Nowadays the considerable advance \cite{CGP97,CIP,Sp91,Pe08,L} in the rigorous derivation of the
Boltzmann kinetic equation in the Boltzmann-Grad scaling limit \cite{GH,Sp80} is well known. The
lack of similar progress for the Enskog kinetic equation \cite{BLPT,Pol,Pol1,MP,BL} suggested by
D. Enskog \cite{E} as a genelalization of the Boltzmann equation on dense gases, is conditioned
by a priori stated collision integral of this kinetic equation. Thus, the rigorous derivation of
the Enskog-type equations from dynamics of hard spheres \cite{CGP97},\cite{GP85} remains an open
problem so far.

In this paper we consider the problem of potentialities inherent in the description of the evolution
of states of a hard sphere system in terms of a one-particle distribution function. In view of
the generic initial data of hard spheres are determined by initial correlations connected with
the forbidden configurations of particles, one of the problems consists in the derivation of the
kinetic equations from underlaying dynamics in case of presence of correlations at initial time.
We establish that in fact, if initial data is completely defined by a one-particle distribution
function, then all possible states of infinitely many hard spheres at arbitrary moment of time
can be described within the framework of a one-particle distribution function by the kinetic
equation without any approximations.

We briefly outline the main results and structure of the paper.
In section 2 we formulate some preliminary facts about dynamics of finitely many hard spheres
and state the main result.
In sections 3 and 4 the main results related to the origin of the Enskog kinetic evolution are
proved. On the basis of the kinetic cluster expansions of cumulants of groups of operators of
finitely many hard spheres which give the possibility to construct the kinetic equations in case
of presence of correlations at initial time, in section 3 we derive the generalized Enskog equation.
Then in section 4 we establish that all possible states are described by a sequence of explicitly
defined functionals of a solution of constructed kinetic equation. Moreover, in section 5 we state
the Markovian generalized Enskog equation and establish its links with the Enskog-type kinetic equations
\cite{BLPT},\cite{BE,RL,B46}. In section 6 the existence theorem is proved for the abstract initial-value
problem of the generalized Enskog equation in the space of integrable functions.
Finally, in section 7 we conclude with some observations and perspectives for future research.

%%%%%%%%%%%%%%%%%%%%%%%%%%%%%%%%%%%%%%%%%%%%%%%%%%%%%%%%%%%%%%%%%%%%%%%%%%%%%%%%%%%%%%%%%%%%%%%%%%%%%%%%%%%%%%%%%%%%%%%%%%%

\section{On the origin of the Enskog kinetic evolution}

\subsection{Preliminaries: dynamics of finitely many hard spheres}
We consider a system of a non-fixed, i.e. arbitrary but finite, number of particles of a unit mass
(the formalism of nonequilibrium grand canonical ensemble \cite{CGP97}), interacting as hard spheres
with a diameter $\sigma>0$. Every particle is characterized by the phase coordinates
$(q_{i},p_{i})\equiv x_{i}\in\mathbb{R}^{3}\times\mathbb{R}^{3},\, i\geq1.$ For configurations
of such a system the following inequalities hold: $|q_i-q_j|\geq\sigma,\,i\neq j\geq1$, i.e.
the set $\mathbb{W}_n\equiv\big\{(q_1,\ldots,q_n)\in\mathbb{R}^{3n}\big||q_i-q_j|<\sigma$ for
at least one pair $(i,j):\,i\neq j\in(1,\ldots,n)\big\}$ is the set of forbidden configurations
of $n$ hard spheres.

The evolution of all possible states is described by the sequence of marginal ($s$-particle)
distribution functions $F_s(t,x_1,\ldots,x_s),\,s\geq1,$ that satisfy the initial-value problem
of the BBGKY hierarchy \cite{CGP97}
\begin{eqnarray}
  \label{NelBog1}
    &&\frac{\partial}{\partial t}F_s(t)=\mathcal{L}^*_{s}F_{s}(t)+
       \sum_{i=1}^{s}\int_{\mathbb{R}^{3}\times\mathbb{R}^{3}}dx_{s+1}
       \mathcal{L}^*_{\mathrm{int}}(i,s+1)F_{s+1}(t),\\ \nonumber\\
  \label{eq:NelBog2}
    &&F_s(t)_{\mid t=0}=F_s^0, \quad s\geq1.
\end{eqnarray}
If $t>0$, in hierarchy of evolution equations (\ref{NelBog1}) the operator $\mathcal{L}^{*}_{s}$
is defined by the Poisson bracket of noninteracting particles with the corresponding boundary
conditions on $\partial\mathbb{W}_{s}$ \cite{CGP97}:
\begin{eqnarray}\label{OperL}
   &&\mathcal{L}^{*}_{s}F_{s}(t)\doteq-\sum\limits_{i=1}^{s}
     \langle p_{i},\frac{\partial}{\partial q_{i}}\rangle_{\mid_{\partial \mathbb{W}_{s}}}F_{s}(t,x_1,\dots,x_s),
\end{eqnarray}
where $\langle\eta,(p_i-p_{s+1})\rangle\doteq{\sum}_{\alpha=1}^3\eta^{\alpha}(p_i^\alpha-p_{s+1}^\alpha)$
is a scalar product. Operator (\ref{OperL}) is the generator of the Liouville equation for states and it
is an adjoint operator to the generator $\mathcal{L}_{s}$ of the Liouville equation for observables  \cite{CGP97}.
The operator $\mathcal{L}^*_{\mathrm{int}}(i,s+1)$ is defined by the expression
\begin{eqnarray}\label{aLint}
   &&\sum_{i=1}^{s}\int_{\mathbb{R}^{3}\times\mathbb{R}^{3}}dx_{s+1}
     \mathcal{L}^{*}_{\mathrm{int}}(i,s+1)F_{s+1}(t)=
     \sigma^{2}\sum\limits_{i=1}^s\int_{\mathbb{R}^3\times\mathbb{S}_{+}^{2}}
     d p_{s+1}d\eta\,\langle\eta,(p_i-p_{s+1})\rangle\times\\
   &&\times\big(F_{s+1}(t,x_1,\ldots,q_i,p_i^{*},\ldots,x_s,q_i-\sigma\eta,p_{s+1}^{*})
     -F_{s+1}(t,x_1,\ldots,x_s,q_i+\sigma\eta,p_{s+1})\big),\nonumber
\end{eqnarray}
where ${\Bbb S}_{+}^{2}\doteq\{\eta\in\mathbb{R}^{3}\big|\,|\eta|=1,\langle\eta,(p_i-p_{s+1})\rangle>0\}$
and the momenta $p_{i}^{*},$ $p_{s+1}^{*}$ are defined by the following equalities
\begin{eqnarray} \label{eq:momenta}
   &&p_i^{*}\doteq p_i-\eta\,\left\langle\eta,\left(p_i-p_{s+1}\right)\right\rangle, \\
   &&p_{s+1}^{*}\doteq p_{s+1}+\eta\,\left\langle\eta,\left(p_i-p_{s+1}\right)\right\rangle \nonumber.
\end{eqnarray}
If $t<0$, the generator of the BBGKY hierarchy is defined by the corresponding expression \cite{CGP97}.

We will consider initial data (\ref{eq:NelBog2}) satisfying the chaos condition \cite{CGP97}, i.e.
statistically independent hard spheres
\begin{eqnarray}\label{eq:Bog2_haos}
    &&F_s(t)_{\mid t=0}=\prod_{i=1}^{s}F_{1}^0(x_i)\mathcal{X}_{\mathbb{R}^{3s}\setminus \mathbb{W}_s},
\end{eqnarray}
where $\mathcal{X}_{\mathbb{R}^{3s}\setminus \mathbb{W}_s}\equiv\mathcal{X}_s(q_1,\ldots,q_s)$ is a
characteristic function of allowed configurations $\mathbb{R}^{3s}\setminus \mathbb{W}_s$ of $s$ hard
spheres. Initial data (\ref{eq:Bog2_haos}) is intrinsic for the kinetic description of many-particle
systems because in this case all possible states are characterized by means of a one-particle marginal
distribution function.

To construct a solution of initial-value problem (\ref{NelBog1})-(\ref{eq:NelBog2}) we adduce some
preliminaries about hard sphere dynamics.
Let $L^{1}_{n}\equiv L^{1}(\mathbb{R}^{3n}\times(\mathbb{R}^{3n}\setminus \mathbb{W}_n))$ be the
space of integrable functions that are symmetric with respect to the permutations of the arguments
$x_1,\ldots,x_n$, equal to zero on the set of forbidden configurations $\mathbb{W}_n$ and equipped
with the norm: $\|f_n\|_{L^{1}(\mathbb{R}^{3n}\times\mathbb{R}^{3n})}=\int dx_1\ldots dx_n|f_n(x_1,\ldots,x_n)|$.
We denote by $L_{n,0}^1\subset L^1_n$ the subspace of continuously differentiable functions
with compact supports.

On a set of measurable functions the following one-parameter mapping: $\mathbb{R}\ni t\mapsto S_{n}(-t)f_{n}$
is defined by
\begin{eqnarray} \label{Sspher}
  &&\hskip-5mm(S_{n}(-t)f_{n})(x_{1},\ldots,x_{n})\equiv S_{n}(-t,1,\ldots,n)f_{n}(x_{1},\ldots,x_{n})\doteq\\
  &&\hskip-5mm\doteq\begin{cases}
         f_{n}(X_{1}(-t,x_{1},\ldots,x_{n}),\ldots,X_{n}(-t,x_{1},\ldots,x_{n})),
         \hskip+4mm\mathrm{if}\,(x_{1},\ldots,x_{n})\in(\mathbb{R}^{3n}\times(\mathbb{R}^{3n}\setminus\mathbb{W}_{n})),\\
         0, \hskip+84mm \mathrm{if}\,(q_{1},\ldots,q_{n})\in\mathbb{W}_{n},
                    \end{cases}\nonumber
\end{eqnarray}
where $X_{i}(-t)$ is a phase trajectory of $ith$ particle constructed in \cite{CGP97}. We note that
the phase trajectories of a hard sphere system are determined almost everywhere on the phase space
$\mathbb{R}^{3n}\times(\mathbb{R}^{3n}\setminus \mathbb{W}_n)$ beyond the set $\mathbb{M}_{n}^0$ of
the zero Lebesgue measure \cite{CGP97,GP85}. The set $\mathbb{M}_{n}^0$ consists of the phase space
points which are specified such initial data that during the evolution generate multiple collisions,
i.e. collisions of more than two particles, more than one two-particle collision at the same instant
and infinite number of collisions within a finite time interval.

On the space $L^{1}_{n}$ mapping (\ref{Sspher}) defines the isometric strong continuous group of
operators \cite{CGP97}, and on $L_{n,0}^1\subset L^1_n$ the infinitesimal generator of group (\ref{Sspher})
is operator (\ref{OperL}) which is the generator of the Liouville equation for hard spheres \cite{CGP97}.

If $F_{1}^0\in L^{1}_{n}$, then a nonperturbative solution of initial-value problem (\ref{NelBog1}),
(\ref{eq:Bog2_haos}) is a sequence of distribution functions $F_s(t,x_1,\ldots,x_s),\,s\geq1$,
represented by the following series \cite{CGP97},\cite{GRS04}
\begin{eqnarray}\label{F(t)}
   &&F_{s}(t,x_{1},\ldots,x_{s})=\\
   &&=\sum\limits_{n=0}^{\infty}\frac{1}{n!}\int_{(\mathbb{R}^{3}\times\mathbb{R}^{3})^{n}}
      dx_{s+1}\ldots dx_{s+n}\mathfrak{A}_{1+n}(-t,\{Y\},X\setminus Y)
      \prod_{i=1}^{s+n}F_{1}^0(x_i)\mathcal{X}_{\mathbb{R}^{3(s+n)}\setminus\mathbb{W}_{s+n}},\nonumber
\end{eqnarray}
where the evolution operator $\mathfrak{A}_{1+n}(-t)$ is the $(n+1)th$-order cumulant of groups
of operators (\ref{Sspher}) defined by the expansion
\begin{eqnarray}\label{nLkymyl}
   &&\hskip-7mm\mathfrak{A}_{1+n}(-t,\{Y\},X\setminus Y)=\sum\limits_{\texttt{P}:\,(\{Y\},\,X\backslash Y)={\bigcup\limits}_i X_i}
      (-1)^{|\texttt{P}|-1}(|\texttt{P}|-1)!\prod_{X_i\subset \texttt{P}}S_{|\theta(X_i)|}(-t,\theta(X_i)),
\end{eqnarray}
and the following notations are used: $\{Y\}$ is a set consisting of one element
$Y\equiv(1,\ldots,s)$, i.e. $|\{Y\}|=1$, $\sum_\texttt{P}$ is a sum over all possible
partitions $\texttt{P}$ of the set $(\{Y\},X\setminus Y)\equiv(\{Y\},s+1,\ldots,s+n)$
into $|\texttt{P}|$ nonempty mutually disjoint subsets $X_i\in(\{Y\},X\setminus Y)$,
the mapping $\theta$ is the declasterization mapping defined by the formula:
$\theta(\{Y\},X\setminus Y)=X$. The simplest cumulants (\ref{nLkymyl}) have the form
\begin{eqnarray*}
    &&\mathfrak{A}_{1}(-t,\{Y\})=S_{s}(-t,Y),\\
    &&\mathfrak{A}_{2}(-t,\{Y\},s+1)=S_{s+1}(-t,Y,s+1)-S_{s}(-t,Y)S_{1}(-t,s+1).
\end{eqnarray*}

If $\|F_{1}^0\|_{L^{1}(\mathbb{R}^{3}\times\mathbb{R}^{3})}<e^{-1}$, series (\ref{F(t)})
converges in the norm of the space $L^{1}_{n}$ for arbitrary $t\in\mathbb{R}$. We remark
that this condition means the condition on average number of particles. If we renormalize
the distribution functions $F_s(t)=\frac{1}{v^s}\widetilde{F}_s(t)$, where $\frac{1}{v}$
is the density of particles (the average number of particles in a unit volume), we obtain
solution expansion (\ref{F(t)}) over powers of the density $\frac{1}{v}$ which converges provided
that: $\frac{1}{v}<e^{-1}\|\widetilde{F}_1^0\|_{L^{1}(\mathbb{R}^{3}\times\mathbb{R}^{3})}^{-1}$.

We note that nonperturbative solution (\ref{F(t)}) of the BBGKY hierarchy (\ref{NelBog1})
is transformed to the form of the perturbation (iteration) series as a result of applying
of analogs of the Duhamel equation \cite{BA} to cumulants of groups of operators (\ref{nLkymyl}).

In case of initial data $F_{1}^0\in L^{1}_{n,0}$ sequence of functions (\ref{F(t)}) is
a strong solution of initial-value problem (\ref{NelBog1}),(\ref{eq:Bog2_haos}) and for
arbitrary initial data $F_{1}^0\in L^{1}_{n}$ it is a weak solution \cite{GRS04}.

In the next section we describe the evolution of states within the framework of the
kinetic theory, i.e. in terms of a one-particle marginal distribution function.

\subsection{The kinetic evolution of hard spheres: main results}
In view of the fact that initial data (\ref{eq:Bog2_haos}) is completely determined by a
one-particle marginal distribution function on allowed configurations, the states given
in terms of the sequence $F(t)=(F_1(t,x_1),\ldots,F_s(t,x_1,\ldots,x_s),\ldots)$ of marginal
distribution functions (\ref{F(t)}) can be described within the framework of the sequence
$F(t\mid F_{1}(t))=(F_1(t,x_1),F_2(t,x_1,x_2\mid F_{1}(t)),\ldots,$ $F_s(t,x_1,\ldots,x_s\mid F_{1}(t)),\ldots)$
of the marginal functionals of the state $F_s(t,x_1,\ldots,x_s\mid F_{1}(t)),\,s\geq2$ which
are explicitly defined with respect to the solution $F_1(t,x_1)$ of the kinetic equation. We
refer to such kinetic equation of a hard sphere system as the generalized Enskog kinetic equation.

Indeed for $t\geq 0$ the one-particle distribution function (\ref{F(t)}) is a solution
of the following initial-value problem
\begin{eqnarray}
  \label{gke1}
   &&\frac{\partial}{\partial t}F_{1}(t,x_1)=-\langle p_1,\frac{\partial}{\partial q_1}\rangle F_{1}(t,x_1)+\\
   &&+\sigma^2\sum_{n=0}^{\infty}\frac{1}{n!}\int_{\mathbb{R}^3\times\mathbb{S}^2_+}d p_2 d\eta
      \int_{(\mathbb{R}^{3}\times\mathbb{R}^{3})^{n}}dx_3\ldots dx_{n+2}\langle\eta,(p_1-p_2)\rangle\times\nonumber \\
   &&\times\big(\mathfrak{V}_{1+n}(t,\{1^{*},2^{*}_{-}\},3,\ldots,n+2)
      F_1(t,q_1,p_1^{*})F_1(t,q_1-\sigma\eta,p_2^{*})\prod_{i=3}^{n+2}F_{1}(t,x_i)-\nonumber \\
   &&-\mathfrak{V}_{1+n}(t,\{1,2_{+}\},3,\ldots,n+2)F_1(t,x_1)
      F_1(t,q_1+\sigma\eta,p_2)\prod_{i=3}^{n+2}F_{1}(t,x_i)\big),\nonumber \\\nonumber \\
  \label{2}
   &&F_1(t,x_1)|_{t=0}= F_1^0(x_1).
\end{eqnarray}
In this evolution equation we use notations from definition (\ref{Sspher}) adopted to the conventional
notation of the Enskog collision integral: indices $(1^{\sharp},2^{\sharp}_{\pm})$ denote that the
evolution operator $\mathfrak{V}_{1+n}(t)$ acts on the corresponding phase points $(q_1,p_1^{\sharp})$
and $(q_1\pm\sigma\eta,p_2^{\sharp})$, and in general case the $(n+1)th$-order evolution operator
$\mathfrak{V}_{1+n}(t),\,n\geq0$, is defined by the expansion ($|Y|=2$ in case of equation (\ref{gke1}))
\begin{eqnarray}\label{skrr}
   &&\hskip-8mm\mathfrak{V}_{1+n}(t,\{Y\},X\setminus Y)\doteq n!\,\sum_{k=0}^{n}(-1)^k\,\sum_{m_1=1}^{n}\ldots
       \sum_{m_k=1}^{n-m_1-\ldots-m_{k-1}}\frac{1}{(n-m_1-\ldots-m_k)!}\times\\
   &&\hskip-8mm\times\widehat{\mathfrak{A}}_{1+n-m_1-\ldots-m_k}(t,\{Y\},s+1,\ldots,s+n-m_1-\ldots-m_k)\times\nonumber\\
   &&\hskip-8mm\times\prod_{j=1}^k\,\ \sum_{k_2^j=0}^{m_j}\ldots \sum_{k^j_{n-m_1-\ldots
       -m_j+s}=0}^{k^j_{n-m_1-\ldots-m_j+s-1}}\prod_{i_j=1}^{s+n-m_1-\ldots-m_j}
       \frac{1}{(k^j_{n-m_1-\ldots-m_j+s+1-i_j}-k^j_{n-m_1-\ldots-m_j+s+2-i_j})!}\times\nonumber\\
   &&\hskip-8mm\times\widehat{\mathfrak{A}}_{1+k^j_{n-m_1-\ldots-m_j+s+1-i_j}-k^j_{n-m_1-\ldots-m_j+s+2-i_j}}
       (t,i_{j},s+n-m_1-\ldots-m_j+1+\nonumber \\
   &&\hskip-8mm+k^j_{s+n-m_1-\ldots-m_j+2-i_j},\ldots,s+n-m_1-\ldots-m_j+k^j_{s+n-m_1-\ldots-m_j+1-i_j}),\nonumber
\end{eqnarray}
where we mean $k^j_1\equiv m_j,\,k^j_{n-m_1-\ldots-m_j+s+1}\equiv 0$, and by the operator
$\widehat{\mathfrak{A}}_{1+n}(t)$ we denote the $(1+n)th$-order scattering cumulant
\begin{eqnarray}\label{scacu}
   &&\widehat{\mathfrak{A}}_{1+n}(t,\{Y\},X\setminus Y)
      \doteq\mathfrak{A}_{1+n}(-t,\{Y\},X\setminus Y)\mathfrak{I}_{s+n}(X)
      \prod_{i=1}^{s+n}\mathfrak{A}_{1}(t,i).
\end{eqnarray}
In definition (\ref{scacu}) the operator $\mathfrak{A}_{1+n}(-t)$ is the $(1+n)th$-order cumulant
(\ref{nLkymyl}) of groups (\ref{Sspher}) and the operator $\mathfrak{I}_{s+n}$ is defined by the formula
\begin{eqnarray}\label{ind}
   &&\mathfrak{I}_{s+n}(X)f_{s+n}\doteq \mathcal{X}_{\mathbb{R}^{3(s+n)}\setminus \mathbb{W}_{s+n}}f_{s+n},
\end{eqnarray}
where $\mathcal{X}_{\mathbb{R}^{3(s+n)}\setminus \mathbb{W}_{s+n}}$ is a characteristic function of
allowed configurations of $s+n$ hard spheres.
In section 4 we will give an illustration of evolution operators (\ref{skrr}).

If $t<0$, the generalized Enskog equation has the corresponding form
\begin{eqnarray}\label{gke1n}
    &&\frac{\partial}{\partial t}F_{1}(t,x_1)=-\langle p_1,\frac{\partial}{\partial q_1}\rangle F_{1}(t,x_1)+\\
    &&+\sigma^2\sum_{n=0}^{\infty}\frac{1}{n!}\int_{\mathbb{R}^3\times\mathbb{S}^2_+}d p_2 d\eta
       \int_{(\mathbb{R}^{3}\times\mathbb{R}^{3})^{n}}dx_3\ldots dx_{n+2}\langle\eta,(p_1-p_2)\rangle\times \nonumber \\
    &&\times\big(\mathfrak{V}_{1+n}(t,\{1,2_{-}\},3,\ldots,n+2)
       F_1(t,q_1,p_1)F_1(t,q_1-\sigma\eta,p_2)\prod_{i=3}^{n+2}F_{1}(t,x_i)-\nonumber \\
    &&-\mathfrak{V}_{1+n}(t,\{1^{*},2^{*}_{+}\},3,\ldots,n+2)F_1(t,q_1,p_1^{*})
       F_1(t,q_1+\sigma\eta,p_2^{*})\prod_{i=3}^{n+2}F_{1}(t,x_i)\big).\nonumber
\end{eqnarray}

The marginal functionals of the state $F_{s}\big(t,x_{1},\ldots,x_{s} \mid F_{1}(t)\big),\,s\geq2$,
are represented by the following expansions over products of the one-particle distribution function $F_{1}(t)$
\begin{eqnarray}\label{f}
    &&F_{s}\big(t,x_{1},\ldots,x_{s}\mid F_{1}(t)\big)\doteq\\
    &&\doteq\sum _{n=0}^{\infty }\frac{1}{n!}
       \int_{(\mathbb{R}^{3}\times\mathbb{R}^{3})^{n}}dx_{s+1}\ldots dx_{s+n}\,
       \mathfrak{V}_{1+n}(t,\{Y\},X\setminus Y)\prod_{i=1}^{s+n}F_{1}(t,x_i),\nonumber
\end{eqnarray}
where the $(n+1)th$-order evolution operator $\mathfrak{V}_{1+n}(t),\,n\geq0$, is defined by expansion
(\ref{skrr}). If $\|F_{1}(t)\|_{L^{1}(\mathbb{R}^{3}\times\mathbb{R}^{3})}<e^{-(3s+2)}$, series (\ref{f})
converges in the norm of the space $L^{1}_{s}$ for arbitrary $t\in\mathbb{R}$.

We remark that in terms of marginal functional of the state (\ref{f}) the collision integral
of the generalized Enskog equation is represented by the expansion
\begin{eqnarray*}
    &&\mathcal{I}_{GEE}=\sigma^{2}\int_{\mathbb {R}^3\!\times\mathbb{S}_{+}^{2}}
       d p_{2}d\eta\,\langle\eta,(p_1-p_{2})\rangle\times\\
    &&\times\big(F_{2}(t,q_1,p_1^*,q_1-\sigma\eta,p_{2}^*\mid F_{1}(t))
        -F_{2}(t,x_1,q_1+\sigma\eta,p_{2}\mid F_{1}(t))\big).
\end{eqnarray*}

In section 5 we establish the relationship of the generalized Enskog equation (\ref{gke1}),(\ref{gke1n})
with the Markovian generalized Enskog equation and the revised Enskog equation \cite{BE,RL}. Here we
only point out that first term of the collision integral expansion in kinetic equation (\ref{gke1}) is
exactly the collision integral of the Boltzmann-Enskog equation.

Thus, under the condition: $\|F_1^0\|_{L^{1}(\mathbb{R}^{3}\times\mathbb{R}^{3})}<e^{-(3s+4)}(1+e^{-(3s+3)})^{-1}$,
initial-value problem of the BBGKY hierarchy (\ref{NelBog1}),(\ref{eq:Bog2_haos}) is equivalent to
initial-value problem of the generalized Enskog equation (\ref{gke1})-(\ref{2}) and the sequence of
marginal functionals of the state $F_{s}\big(t,x_{1},\ldots,x_{s}\mid F_{1}(t)\big),\,s\geq2$, defined
by expansions (\ref{f}).
The proof of this statement is the subject of the next two sections. The existence theorem of the Cauchy
problem (\ref{gke1})-(\ref{2}) in the space of integrable functions is stated in section 6.

Finally, it should be noted that the possibility to describe the evolution of all possible states of
hard spheres in an equivalent way by the restated Cauchy problem for the generalized Enskog equation
and by a sequence of the marginal functionals of the state as compared with the corresponding Cauchy
problem of the BBGKY hierarchy is an inherent property of the description of many-particle systems
within the framework of the formalism of nonequilibrium grand canonical ensemble which adopted to the
description of infinite-particle systems in suitable functional spaces \cite{CGP97}.

\subsection{A one-dimensional hard sphere system}
We consider a one-dimensional system of hard spheres (hard rods of the length $\sigma>0$). In view
of that in this case for $t>0$ operator (\ref{aLint}) is defined by the expression \cite{Ger}
\begin{eqnarray*}\label{aLint1}
    &&\hskip-7mm\sum_{i=1}^{s}\int_{\mathbb{R}^{3}\times\mathbb{R}^{3}}dx_{s+1}
       \mathcal{L}^*_{\mathrm{int}}(i,s+1)F_{s+1}(t)=\\
    &&\hskip-7mm\sum\limits_{i=1}^s\int_{0}^{\infty}d P P\big(F_{s+1}(t,x_1,\ldots,q_i,p_i-P,\ldots,x_s,q_i-\sigma,p_i)-
       F_{s+1}(t,x_1,\ldots,x_s,q_i-\sigma ,p_i+P)+\nonumber \\
    &&\hskip-5mm+F_{s+1}(t,x_1,\ldots,q_i,p_i+P,\ldots,x_s,q_i+\sigma,p_i)-
       F_{s+1}(t,x_1,\ldots,x_s,q_i+\sigma,p_i-P)\big),\nonumber
\end{eqnarray*}
then for $t>0$ the generalized Enskog kinetic equation for a one-dimensional hard sphere system has
the form
\begin{eqnarray*}\label{OneEnsEq1}
  &&\hskip-5mm\frac{\partial}{\partial t}F_{1}(t,x_1)=-p_1\frac{\partial}{\partial q_1}F_{1}(t,x_1)+\\
  &&\hskip-5mm+\sum\limits_{n=0}^{\infty}\frac{1}{n!}\int_{0}^{\infty}d P P
     \big(F_{2}(t,q_1,p_1-P,q_1-\sigma,p_1\mid F_{1}(t))-F_{2}(t,q_1,p_1,q_1-\sigma,p_1+P\mid F_{1}(t))+ \\
  &&\hskip-5mm+F_{2}(t,q_1,p_1+P,q_1+\sigma,p_1\mid F_{1}(t))-
     F_{2}(t,q_1,p_1,q_1+\sigma,p_1-P\mid F_{1}(t))\big),\nonumber
\end{eqnarray*}
where the marginal functional of the state $F_{2}\big(t,x_1,x_2\mid F_{1}(t)\big)$ is defined by formula
(\ref{f}). The series in right-hand side of this equation converges if
$\|F_1(t)\|_{L^1(\mathbb{R}\times\mathbb{R})}<e^{-8}$ (see section 4). For $t<0$, the structure of the
collision integral of the generalized Enskog equation induces by the structure of the corresponding term
of the BBGKY hierarchy for a one-dimensional hard sphere system \cite{Ger}.

%%%%%%%%%%%%%%%%%%%%%%%%%%%%%%%%%%%%%%%%%%%%%%%%%%%%%%%%%%%%%%%%%%%%%%%%%%%%%%%%%%%%%%%%%%%%%%%%%%%%%%%%%%%%%%%%%%%%%%%%%%%%%%%%%

\section{The generalized Enskog kinetic equation}

\subsection{Auxiliary facts: kinetic cluster expansions}
To reformulate the Cauchy problem (\ref{NelBog1}),(\ref{eq:Bog2_haos}) as the new Cauchy problem
for a one-particle distribution function governed by the kinetic equation and the sequence of
explicitly defined marginal functionals of the state which determined by a solution of such
Cauchy problem, we introduce the following cluster expansions of cumulants (\ref{nLkymyl}) of
groups of operators (\ref{Sspher})
\begin{eqnarray}\label{rrrl2}
   &&\hskip-5mm\mathfrak{A}_{1+n}(-t,\{Y\},X\setminus Y)\mathfrak{I}_{s+n}(X)=\\ \nonumber
   &&\hskip-5mm=\sum_{k_1=0}^{n}\frac{n!}{(n-k_1)!k_1!}\,
     \mathfrak{V}_{1+n-k_1}(t,\{Y\},s+1,\ldots,s+n-k_1)\times \nonumber
\end{eqnarray}
\begin{eqnarray*}
   &&\hskip-5mm\times\sum_{k_2=0}^{k_1}\frac{k_1!}{k_2!(k_1-k_2)!}\ldots\sum_{k_{n-k_1+s}=0}^{k_{n-k_1+s-1}}
     \frac{k_{n-k_1+s-1}!}{k_{n-k_1+s}!(k_{n-k_1+s-1}-k_{n-k_1+s})!}\times \\ \nonumber
   &&\hskip-5mm\times\prod_{i=1}^{s+n-k_1}{\mathfrak{A}}_{1+k_{n-k_1+s+1-i}-k_{n-k_1+s+2-i}}
     (-t,i,s+n-k_1+1+k_{s+n-k_1+2-i},\ldots\\ \nonumber
   &&\hskip-5mm\ldots,s+n-k_1+k_{s+n-k_1+1-i})
     \mathfrak{I}_{1+k_{n-k_1+s+1-i}-k_{n-k_1+s+2-i}}(i,s+n-k_1+1+\nonumber\\
   &&\hskip-5mm+k_{s+n-k_1+2-i},\ldots,s+n-k_1+k_{s+n-k_1+1-i}), \qquad n\geq0,\nonumber
\end{eqnarray*}
where the operator $\mathfrak{I}_{s+n}$ is defined by formula (\ref{ind}) and the following convention
is assumed: $k_{s+1}\equiv 0$. In case of initial states (\ref{eq:Bog2_haos}) that involve correlations
such type (\ref{rrrl2}) cluster expansions (the kinetic cluster expansions) permits to take into consideration
initial correlations in the kinetic equation.

We give a few examples of recurrence relations (\ref{rrrl2}) in terms of scattering cumulants
(\ref{scacu}). Acting on both sides of equality (\ref{rrrl2}) by the evolution operators
$\prod_{i=1}^{s+n}\mathfrak{A}_1(t,i)$, we obtain
\begin{eqnarray*}\label{rrrle}
   &&\widehat{\mathfrak{A}}_{1}(t,\{Y\})=\mathfrak{V}_{1}(t,\{Y\}),\\
   &&\widehat{\mathfrak{A}}_{2}(t,\{Y\},s+1)=\mathfrak{V}_{2}(t,\{Y\},s+1)+
       \mathfrak{V}_{1}(t,\{Y\})\sum_{i_1=1}^s \widehat{\mathfrak{A}}_{2}(t,i_1,s+1),\\
   &&\widehat{\mathfrak{A}}_{3}(t,\{Y\},s+1,s+2)=\mathfrak{V}_{3}(t,\{Y\},s+1,s+2)+\\
   &&+2!\mathfrak{V}_{2}(t,\{Y\},s+1)\sum_{i_1=1}^{s+1}\widehat{\mathfrak{A}}_{2}(t,i_1,s+2)+\\
   &&+\mathfrak{V}_{1}(t,\{Y\})\big(\sum_{i_1=1}^s
       \widehat{\mathfrak{A}}_{3}(t,i_1,s+1,s+2)+2!\sum_{1=i_1<i_2}^s
       \widehat{\mathfrak{A}}_{2}(t,i_1,s+1)\widehat{\mathfrak{A}}_{2}(t,i_2,s+2)\big),
\end{eqnarray*}
where $\widehat{\mathfrak{A}}_{1+n}(t)$ is the $(1+n)th$-order $(n=0,1,2)$ scattering cumulant (\ref{scacu}).
It is easy to prove that solutions of these relations are given by the following expansions
\begin{eqnarray}\label{rrrls}
   &&\mathfrak{V}_{1}(t,\{Y\})=\widehat{\mathfrak{A}}_{1}(t,\{Y\})\doteq
       S_s(-t,1,\ldots,s)\mathfrak{I}_{s}(Y)\prod_{i=1}^{s}S_1(t,i),\\
   &&\mathfrak{V}_{2}(t,\{Y\},s+1)=\widehat{\mathfrak{A}}_{2}(t,\{Y\},s+1)-
       \widehat{\mathfrak{A}}_{1}(t,\{Y\})\sum_{i_1=1}^s
       \widehat{\mathfrak{A}}_{2}(t,i_1,s+1),\nonumber\\
   &&\mathfrak{V}_{3}(t,\{Y\},s+1,s+2)=\widehat{\mathfrak{A}}_{3}(t,\{Y\},s+1,s+2)
       -2!\widehat{\mathfrak{A}}_{2}(t,\{Y\},s+1)\times\nonumber\\
   &&\times\sum_{i_1=1}^{s+1}\widehat{\mathfrak{A}}_{2}(t,i_1,s+2)-
       \widehat{\mathfrak{A}}_{1}(t,\{Y\})\big(\sum_{i_1=1}^{s}
       \widehat{\mathfrak{A}}_{3}(t,i_1,s+1,s+2)-\nonumber\\
   &&-2!\sum_{i_1=1}^{s}\sum_{i_2=1}^{s+1}\widehat{\mathfrak{A}}_{2}(t,i_1,s+1)
       \widehat{\mathfrak{A}}_{2}(t,i_2,s+2)+
       2!\sum_{1=i_1< i_2}^{s}\widehat{\mathfrak{A}}_{2}(t,i_1,s+1)
       \widehat{\mathfrak{A}}_{2}(t,i_2,s+2)\big).\nonumber
\end{eqnarray}

In the general case solutions of recurrence relations (\ref{rrrl2}), i.e. the $(1+n)th$-order ($n\geq0$)
evolution operators $\mathfrak{V}_{1+n}(t)$, are defined by the expansions (\ref{skrr})
\begin{eqnarray}\label{skrrn}
    &&\hskip-8mm\mathfrak{V}_{1+n}(t,\{Y\},X\setminus Y)=n!\sum_{k=0}^{n}(-1)^k\,\sum_{m_1=1}^{n}\ldots
       \sum_{m_k=1}^{n-m_1-\ldots-m_{k-1}}\frac{1}{(n-m_1-\ldots-m_k)!}\times\\
    &&\hskip-8mm\times\widehat{\mathfrak{A}}_{1+n-m_1-\ldots-m_k}(t,\{Y\},s+1,
       \ldots,s+n-m_1-\ldots-m_k)\times\nonumber\\
    &&\hskip-8mm\times\prod_{j=1}^k\,\ \sum_{k_2^j=0}^{m_j}\ldots
       \sum_{k^j_{n-m_1-\ldots-m_j+s}=0}^{k^j_{n-m_1-\ldots-m_j+s-1}}\prod_{i_j=1}^{s+n-m_1-\ldots-m_j}
       \frac{1}{(k^j_{n-m_1-\ldots-m_j+s+1-i_j}-k^j_{n-m_1-\ldots-m_j+s+2-i_j})!}\times\nonumber\\
    &&\hskip-8mm\times\widehat{\mathfrak{A}}_{1+k^j_{n-m_1-\ldots-m_j+s+1-i_j}-k^j_{n-m_1-\ldots-m_j+s+2-i_j}}
       (t,i_{j},s+n-m_1-\ldots-m_j+1+\nonumber \\
    &&\hskip-8mm+k^j_{s+n-m_1-\ldots-m_j+2-i_j},\ldots,s+n-m_1-\ldots-m_j+k^j_{s+n-m_1-\ldots-m_j+1-i_j}),\nonumber\\ \nonumber\\
    &&\hskip-8mm s\geq2,\, n\geq0.\nonumber
\end{eqnarray}
This fact is verified as a result of the substitution of expressions (\ref{skrrn}) into recurrence
relations (\ref{rrrl2}). It should be emphasized that the correlations generated by the dynamics of
hard spheres and initial correlations connected with the forbidden configurations are completely
governed by evolution operators (\ref{skrrn}).

Let us indicate some properties of evolution operators (\ref{skrrn}). In the sense of the weak
convergence of the space $L^{1}(\mathbb{R}^{3s}\times\mathbb{R}^{3s}\backslash \mathbb{W}_s)$
the infinitesimal generator of first-order evolution operator (\ref{rrrls}) is defined by operator
(\ref{aLint})
\begin{eqnarray*}
    &&\lim\limits_{t\rightarrow 0}\frac{1}{t}(\mathfrak{V}_{1}(t,\{Y\})-I)f_{s}
       =\sum_{i<j=1}^s\mathcal{L}^*_{\mathrm{int}}(i,j)f_{s},
\end{eqnarray*}
where the operator $\mathcal{L}^*_{\mathrm{int}}(i,j)$ is defined by formula (\ref{aLint}).

In the general case, where $n\geq1$, in the same sense as for evolution operators (\ref{rrrls})
it holds that
\begin{eqnarray*}
    &&\lim\limits_{t\rightarrow 0}\frac{1}{t}\,\mathfrak{V}_{1+n}(t,\{Y\},Y\backslash X)f_{s+n}=0.
\end{eqnarray*}

Since in case of a system of non-interacting particles for scattering cumulants (\ref{scacu}) the
next equalities hold: $\widehat{\mathfrak{A}}_{1+n}(t,\{Y\},X \backslash Y)=\prod_{i=1}^s S_1(-t,i)
\mathfrak{I}_s(Y)\prod_{j=1}^s S_1(t,j)\delta_{n,0},\,n\geq0$, then we have
\begin{eqnarray*}
   &&\mathfrak{V}_{1+n}(t,\{Y\},X \backslash Y)=
      \prod_{i=1}^s S_1(-t,i)\mathfrak{I}_s(Y)\prod_{j=1}^s S_1(t,j)\delta_{n,0},
\end{eqnarray*}
where $\delta_{n,0}$ is a Kronecker symbol.

Similarly, at initial time $t=0$ it is true that:
$\mathfrak{V}_{1+n}(0,\{Y\},X \backslash Y)=\mathfrak{I}_s(Y)\delta_{n,0}$.

\subsection{The derivation of the generalized Enskog equation}
We establish that the one-particle distribution function defined by expansion
(\ref{F(t)}),(\ref{nLkymyl}) in case of $s=1$, i.e.
\begin{eqnarray}\label{F(t)1}
    &&\hskip-12mmF_{1}(t,x_1)=\sum\limits_{n=0}^{\infty}\frac{1}{n!}
      \int_{(\mathbb{R}^3\times\mathbb{R}^3)^n}dx_2\ldots dx_{n+1}\,
      \mathfrak{A}_{1+n}(-t,1,\ldots,n+1)\prod_{i=1}^{n+1}F_{1}^0(x_i)
      \mathcal{X}_{\mathbb{R}^{3(1+n)}\setminus \mathbb{W}_{1+n}},
\end{eqnarray}
is governed by kinetic equation (\ref{gke1}) or (\ref{gke1n}).

In view of the validity of the following equalities for cumulants (\ref{nLkymyl}) of groups
(\ref{Sspher}) for $f\in L^{1}_{0}$ in the sense of the norm convergence \cite{CGP97},\cite{Pe08}
\begin{eqnarray*}
   &&\lim\limits_{t\rightarrow 0}\frac{1}{t}\mathfrak{A}_{1}(-t,1)f_{1}(x_1)
     =\mathcal{L}^*_1(1)f_{1}(x_1),\\
   &&\lim\limits_{t\rightarrow 0}\frac{1}{t}\int_{\mathbb{R}^3\times\mathbb{R}^3}
     dx_2\mathfrak{A}_{2}(-t,1,2)f_{2}(x_1,x_2)
     =\int_{\mathbb{R}^3\times \mathbb{R}^3}dx_2\mathcal{L}^*_{\mathrm{int}}(1,2)f_{2}(x_1,x_2),\\
   &&\lim\limits_{t\rightarrow 0}\frac{1}{t}
     \int_{\mathbb{R}^{3n}\times(\mathbb{R}^{3n}\setminus \mathbb{W}_n)}
     dx_2\ldots dx_{n+1}\,\mathfrak{A}_{1+n}(-t,1,\ldots,n+1)f_{n+1}=0,\quad n\geq2,
\end{eqnarray*}
where the operators $\mathcal{L}^*_1(1)$ and $\mathcal{L}^*_{\mathrm{int}}(1,2)$ are defined by
formulas (\ref{OperL}) and (\ref{aLint}) respectively, as a result of the differentiation over
the time variable of expression (\ref{F(t)1}) in the sense of pointwise convergence on the space
$L^{1}(\mathbb{R}^3\times\mathbb{R}^3)$ we obtain \cite{CGP97},\cite{GP85}
\begin{eqnarray}\label{de}
  &&\hskip-7mm\frac{\partial}{\partial t}F_{1}(t,x_1)=
     -\langle p_1,\frac {\partial}{\partial q_1}\rangle F_{1}(t,x_1)+
     \int_{\mathbb{R}^3\times\mathbb{R}^3}dx_2\mathcal{L}^*_{\mathrm{int}}(1,2)
     \sum\limits_{n=0}^{\infty}\frac{1}{n!}\times\\
  &&\hskip-7mm\times\int_{(\mathbb{R}^{3}\times\mathbb{R}^{3})^{n}}
     dx_{3}\ldots dx_{n+2}\,\mathfrak{A}_{1+n}(-t,\{1,2\},3,\ldots,n+2)
     \prod_{i=1}^{n+2}F_1^0(x_i)\mathcal{X}_{\mathbb{R}^{3(n+2)}\setminus \mathbb{W}_{n+2}}.\nonumber
\end{eqnarray}
To represent the second term of the right-hand side in this equality in terms of one-particle
distribution function (\ref{F(t)1}) we expand cumulants (\ref{nLkymyl}) of groups (\ref{Sspher})
into kinetic cluster expansions (\ref{rrrl2}) for the case $s=2$. Then we transform the series over
the summation index $n$ and the sum over the index $k_1$ to the two-fold series. As a result it holds
\begin{eqnarray}\label{scint}
  &&\hskip-9mm\sum\limits_{n=0}^{\infty}\frac{1}{n!}\int_{(\mathbb{R}^{3}\times\mathbb{R}^{3})^{n}}
     dx_{3}\ldots dx_{n+2}\,\mathfrak{A}_{1+n}(-t,\{1,2\},3,\ldots,n+2)\mathfrak{I}_{n+2}(1,\ldots,n+2)
     \prod_{i=1}^{n+2}F_1^0(x_i)\\
  &&\hskip-9mm=\sum\limits_{n=0}^{\infty}\frac{1}{n!}\sum_{k_1=0}^{\infty}
     \int_{(\mathbb{R}^{3}\times\mathbb{R}^{3})^{n+k_1}}dx_{3}\ldots dx_{n+2+k_1}
     \mathfrak{V}_{1+n}(t,\{1,2\},3,\ldots,n+2)\sum_{k_2=0}^{k_1}\ldots\nonumber\\
  &&\hskip-9mm\ldots\sum_{k_{n+2}=0}^{k_{n+1}}\frac{1}{k_{n+2}!(k_{n+1}-k_{n+2})!
     \ldots(k_1-k_2)!}\prod_{i=1}^{n+2}{\mathfrak{A}}_{1+k_{n+3-i}-k_{n+4-i}}(-t,i,n+3+k_{n+4-i},\nonumber \\
  &&\hskip-9mm\ldots,n+2+k_{n+3-i})\mathcal{X}_{1+k_{n+3-i}-k_{n+4-i}}(q_i,q_{n+3+k_{n+4-i}},
     \ldots,q_{n+2+k_{n+3-i}})\prod_{j=1}^{n+2+k_1}F_1^0(x_j). \nonumber
\end{eqnarray}
The last series can be expressed in terms of one-particle distribution function (\ref{F(t)1})
according to the following product formula for series (\ref{F(t)1})
\begin{eqnarray}\label{Pr}
   &&\hskip-7mm\prod_{i=1}^{n+2}F_1(t,x_i)=\sum_{k_1=0}^{\infty}\sum_{k_2=0}^{k_1}\ldots
     \sum_{k_{n+2}=0}^{k_{n+1}}\frac{1}{k_{n+2}!(k_{n+1}-k_{n+2})!\ldots(k_1-k_2)!}\times \\
   &&\hskip-7mm\times\int_{(\mathbb{R}^{3}\times\mathbb{R}^{3})^{k_1}}dx_{n+3}\ldots dx_{n+2+k_1}
     \prod_{i=1}^{n+2}{\mathfrak{A}}_{1+k_{n+3-i}-k_{n+4-i}}(-t,i,n+3+k_{n+4-i},\nonumber \\
   &&\hskip-7mm\ldots,n+2+k_{n+3-i})\mathcal{X}_{1+k_{n+3-i}-k_{n+4-i}}(q_i,q_{n+3+k_{n+4-i}},
     \ldots,q_{n+2+k_{n+3-i}})\prod_{j=1}^{n+2+k_1}F_1^0(x_j).\nonumber
\end{eqnarray}

According to equalities (\ref{scint}) and (\ref{Pr}), and taking into account definition (\ref{aLint})
of the operator $\mathcal{L}^*_{\mathrm{int}}(1,2)$ for $t>0$, from equality (\ref{de}) we finally derive
\begin{eqnarray*}
   &&\frac{\partial}{\partial t}F_{1}(t,x_1)=
      -\langle p_1,\frac{\partial}{\partial q_1}\rangle F_{1}(t,x_1)+\\
   &&+\sigma^2\sum_{n=0}^{\infty}\frac{1}{n!}\int_{\mathbb{R}^3\times\mathbb{S}^2_+}d p_2 d\eta
      \int_{(\mathbb{R}^{3}\times\mathbb{R}^{3})^{n}}dx_3\ldots dx_{n+2}\langle\eta,(p_1-p_2)\rangle\times\\
   &&\times\big(\mathfrak{V}_{1+n}(t,\{1^{*},2^{*}_{-}\},3,\ldots,n+2)
      F_1(t,q_1,p_1^{*})F_1(t,q_1-\sigma\eta,p_2^{*})\prod_{i=3}^{n+2}F_{1}(t,x_i)-\\
   &&-\mathfrak{V}_{1+n}(t,\{1,2_{+}\},3,\ldots,n+2)F_1(t,x_1)
      F_1(t,q_1+\sigma\eta,p_2)\prod_{i=3}^{n+2}F_{1}(t,x_i)\big),
\end{eqnarray*}
where we used notations accepted in equation (\ref{gke1}). We treat the constructed identity for the
one-particle distribution function as the kinetic equation for a hard sphere system. We refer to this
evolution equation as the generalized Enskog kinetic equation. The collision integral series converges
(see next section in the general case) under the condition:
$\|F_1(t)\|_{L^1(\mathbb{R}^3\times\mathbb{R}^3)}<e^{-8}$.

We emphasize that coefficients of the derived kinetic equation (\ref{gke1}) are determined by initial
correlations connected with the forbidden configurations of particles.

For quantum many-particle system the generalized kinetic equation has been formulated in \cite{GT} (see
also a review \cite{GerUJP}) and for systems of classical particles interacting via smooth potential it
was stated by other method in \cite{CGP97},\cite{GP97,GP98} and for discrete velocity models in \cite{BGL}.

%%%%%%%%%%%%%%%%%%%%%%%%%%%%%%%%%%%%%%%%%%%%%%%%%%%%%%%%%%%%%%%%%%%%%%%%%%%%%%%%%%%%%%%%%%%%%%%%%%%%%%%%%%%%%%%

\section{Marginal functionals of the state}

\subsection{A mean value functional within the framework of the kinetic evolution}
By means of kinetic cluster expansions (\ref{rrrl2}) in case of $s\geq2$, solution expansions
(\ref{F(t)}) of the BBGKY hierarchy (\ref{NelBog1}) for hard spheres can be represented in the
form of the expansions with respect to one-particle distribution function (\ref{F(t)1}) which
is governed by the generalized Enskog equation (\ref{gke1}).

Indeed, we expand cumulants (\ref{nLkymyl}) of groups (\ref{Sspher}) from solution expansions
(\ref{F(t)}) into kinetic cluster expansions (\ref{rrrl2}) in case of $s\geq2$. Then, transforming
the series over the summation index $n$ and the sum over the index $k_1$ to the two-fold series,
we obtain
\begin{eqnarray*}
   &&F_{s}(t,x_1,\ldots,x_s)=\sum\limits_{n=0}^{\infty}\frac{1}{n!}\int\limits_{(\mathbb{R}^{3}\times\mathbb{R}^{3})^{n}}
      dx_{s+1}\ldots dx_{s+n}\mathfrak{A}_{1+n}(-t,\{Y\},X\setminus Y)\mathfrak{I}_{s+n}(X)
      \prod_{i=1}^{s+n}F_{1}^0(x_i)=\\
   &&=\sum\limits_{n=0}^{\infty}\frac{1}{n!}\int_{(\mathbb{R}^{3}\times\mathbb{R}^{3})^n}dx_{s+1}\ldots
      dx_{s+n}\,\mathfrak{V}_{1+n}(t,\{Y\},X\setminus Y)\sum_{k_1=0}^{\infty}\sum_{k_2=0}^{k_1}\ldots\\
   &&\ldots\sum_{k_{n+s}=0}^{k_{n+s-1}}\frac{1}{k_{n+s}!(k_{n+s-1}-k_{n+s})!\ldots(k_1-k_2)!}
      \int_{(\mathbb{R}^{3}\times\mathbb{R}^{3})^{k_1}}dx_{n+s+1}\ldots\\
   &&\ldots dx_{n+s+k_1}\prod_{i=1}^{n+s}{\mathfrak{A}}_{1+k_{n+s+1-i}-k_{n+s+2-i}}(-t,i,n+s+1+k_{n+s+2-i},
      \ldots,n+s+k_{n+s+1-i})\\
   &&\times\prod_{j=1}^{n+s+k_1}F_1^0(x_j)
      \mathcal{X}_{1+k_{n+s+1-i}-k_{n+s+2-i}}(q_i,q_{n+s+1+k_{n+s+2-i}},\ldots,q_{n+s+k_{n+s+1-i}}),
\end{eqnarray*}
where we use notations accepted above.
According to the validity of product formula (\ref{Pr}) in case of the arbitrary index $s\geq2$,
in obtained expansion the series over the index $k_1$ can be expressed in terms of one-particle
distribution function (\ref{F(t)1}). Hence the following equality holds
\begin{eqnarray*}
   &&\sum\limits_{n=0}^{\infty}\frac{1}{n!}\int_{(\mathbb{R}^{3}\times\mathbb{R}^{3})^{n}}dx_{s+1}\ldots dx_{s+n}
       \mathfrak{A}_{1+n}(-t,\{Y\},X\setminus Y)\mathfrak{I}_{s+n}(X)\prod_{i=1}^{s+n}F_1^0(x_i)=\\
   &&=\sum\limits_{n=0}^{\infty}\frac{1}{n!}\int_{(\mathbb{R}^{3}\times\mathbb{R}^{3})^{n}}dx_{s+1}\ldots dx_{s+n}
       \mathfrak{V}_{1+n}(t,\{Y\},X\setminus Y)\prod_{i=1}^{s+n}F_{1}(t,x_i),
\end{eqnarray*}
where the $(n+1)th$-order evolution operator $\mathfrak{V}_{1+n}(t)$ is defined by formula (\ref{skrr}).

Thus, if initial data is completely defined by a one-particle distribution function on allowed
configurations, then the evolution of states governed by the BBGKY hierarchy (\ref{NelBog1})
can be completely described by the generalized kinetic equation (\ref{gke1}) and the sequence
of marginal functionals of the state $F_{s}\big(t,x_{1},\ldots,x_{s}\mid F_{1}(t)\big),\,s\geq2$,
defined by expansions (\ref{f}). In other words, for mentioned initial states all possible states
of a hard sphere system at arbitrary moment of time can be described within the framework of a
one-particle distribution function without any approximations. Therefore the statement of
section 2.2 about an equivalence of stated Cauchy problems is valid.

Typical properties for the kinetic description of the evolution of constructed marginal functionals
of the state (\ref{f}) are induced by the properties of evolution operators (\ref{skrr}).

We indicate that expansions (\ref{f}) of marginal functionals of the state are nonequilibrium analog
of the Mayer-Ursell expansions over powers of the density of equilibrium distribution functions.
In particular, if the one-particle distribution function is the Maxwellian distribution function,
then marginal functionals of the state (\ref{f}) is transformed into the Mayer-Ursell expansions
\cite{B46},\cite{UF,GP63}.

Owing that a sequence of solutions of the BBGKY hierarchy is represented in terms of a solution of
the generalized Enskog equation by the marginal functionals of the state (\ref{f}), the average
value of the $s$-ary marginal observable $B^{(s)}=(0,\ldots,0,a_{s}(x_1,\ldots,x_s),0,\ldots)$ \cite{GerUJP}
is defined in terms of the functional $F_s(t,x_1,\ldots,x_s|F_1(t))$, by the formula
\begin{eqnarray*}
  &&\langle B^{(s)}\rangle(t)=\frac{1}{s!}\int_{(\mathbb{R}^{3}\times\mathbb{R}^{3})^{s}}
     dx_1\ldots dx_s a_s(x_1,\ldots,x_s)F_s(t,x_1,\ldots,x_s\mid F_1(t)),
\end{eqnarray*}
and, in particular the average value of the additive marginal observable $B^{(1)}=(0,a_{1}(x_1),0,\ldots)$
is determined by
\begin{eqnarray}\label{averageg}
  &&\langle B^{(1)}\rangle(t)=\int_{\mathbb{R}^{3}\times\mathbb{R}^{3}}dx_1 a_1(x_1)F_1(t,x_1),
\end{eqnarray}
where the function $F_1(t,x_1)$ is a solution of the Cauchy problem (\ref{gke1})-(\ref{2}).

In fact functionals (\ref{f}) characterize the correlations of states of a hard sphere system.
We illustrate this statement by giving an example of the marginal functional from the collision
integral of the generalized Enskog equation. In this case the marginal correlation functional
$G_{2}\big(t,x_1,x_2\mid F_{1}(t)\big)$ is defined by the cluster expansion
\begin{eqnarray*}
    &&F_{2}\big(t,x_1,x_2\mid F_{1}(t)\big)=F_{1}(t,x_1)F_{1}(t,x_2)+G_{2}\big(t,x_1,x_2\mid F_{1}(t)\big),
\end{eqnarray*}
and it is represented by the series similar to (\ref{f})
\begin{eqnarray}\label{cf}
   &&G_{2}\big(t,x_1,x_2\mid F_{1}(t)\big)=(\widehat{\mathfrak{A}}_{1}(t,\{1,2\})-I)F_{1}(t,x_1)F_{1}(t,x_2)+\\
   &&+\sum\limits_{n=1}^{\infty}\frac{1}{n!}\int_{(\mathbb{R}^{3}\times\mathbb{R}^{3})^{n}}dx_{3}\ldots dx_{n+2}
      \,\mathfrak{V}_{1+n}(t,\{1,2\},3,\ldots,n+2)\prod_{i=1}^{n+2}F_{1}(t,x_i).\nonumber
\end{eqnarray}
Basically this cluster expansion gives the classification of all possible currently in use scaling
limits \cite{Sp91},\cite{GH,Sp80} since it characterizes the deviation of the state from uncorrelated
state, i.e. the state satisfying a chaos property. In the scaling limits it is assumed that chaos
property of initial state preserves in time, and hence the scaling limit means such limit of dimensionless
parameters of a system in which the marginal correlation functional $G_{2}\big(t,1,2\mid F_{1}(t)\big)$
vanishes. According to definition (\ref{cf}), it is possible, if particles of every finite particle
cluster move without collisions. In consequence of the definition of a dispersion, for example, of an
additive-type observable
\begin{eqnarray*}
  &&\hskip-5mm\langle(B^{(1)}-\langle B^{(1)}\rangle(t))^2\rangle(t)=\\
  &&\hskip-5mm=\int_{\mathbb{R}^{3}\times\mathbb{R}^{3}}dx_1(a_1^2(x_1)-\langle B^{(1)}\rangle^2(t))F_{1}(t,x_1)+
     \int_{({\mathbb{R}^{3}\times\mathbb{R}^{3}})^2}dx_1dx_2 a_{1}(x_1)a_{1}(x_2)G_{2}\big(t,x_1,x_2\mid F_{1}(t)\big),
\end{eqnarray*}
where the functional $\langle B^{(1)}\rangle(t)$ is determined by expression (\ref{averageg}), the value of
the dispersion is minimal for states characterized by marginal correlation functionals (\ref{cf}) equal to
zero, i.e. from macroscopic point of view the scaling evolution of hard sphere states is the evolution with
minimal dispersion.

Thus, in case of the Boltzmann-Grad scaling limit \cite{CGP97,CIP,Sp91,Pe08,L,GH} the limit of solution
(\ref{F(t)1}) of the generalized Enskog equation is governed by the Boltzmann equation for hard spheres
and the limit of marginal functionals (\ref{f}) are products of the limit one-particle distribution functions
that means the propagation of a chaos.

\subsection{On the convergence of a series for marginal functionals}
We prove the existence of marginal functionals of the state (\ref{f}) for
$F_1(t)\in L^1(\mathbb{R}^3\times\mathbb{R}^3)$. In case of smooth interaction potentials
this fact was established in \cite{GP97,GP98} by other method.

Owing to the fact that operator (\ref{Sspher}) is  the isometric operator and the validity for
$|X\backslash Y|>1$ the identity:
$\sum_{\texttt{P}:\,(X\setminus Y)={\bigcup\limits}_i X_i}(-1)^{|\texttt{P}|-1}(|\texttt{P}|-1)!=0$,
for cumulants (\ref{nLkymyl}) of groups (\ref{Sspher}) the following identity holds
\begin{eqnarray*}
  &&\int_{(\mathbb{R}^{3}\times\mathbb{R}^{3})^{n}}dx_{s+1}\ldots dx_{s+n}\,
     \mathfrak{A}_{n}(t,X\setminus Y)\prod_{i=1}^{s+n}F_{1}(t,x_i)=0, \quad n>1.
\end{eqnarray*}
As a result of the validity of these identities we can represent marginal functionals of the state
(\ref{f}) by the following renormalized series
\begin{eqnarray}\label{ff}
  &&\hskip-12mm F_s(t,x_1,\ldots,x_s\mid F_1(t))=\sum_{n=0}^{\infty}\frac{1}{n!}
     \int_{(\mathbb{R}^{3}\times\mathbb{R}^{3})^{n}}dx_{s+1}\ldots dx_{s+n}
     \widetilde{\mathfrak{V}}_{1+n}\big(t,\{Y\},X\setminus Y\big)\prod_{i=1}^{s+n}F_{1}(t,x_i),
 \end{eqnarray}
where the $(1+n)th$-order renormalized evolution operator $\widetilde{\mathfrak{V}}_{1+n}(t)$ is defined
by the expansion
\begin{eqnarray}\label{skrr1}
  &&\hskip-7mm\widetilde{\mathfrak{V}}_{1+n}(t,\{Y\},X\setminus Y)=n!\sum_{k=0}^{n}(-1)^k\sum_{m_1=1}^{n}\ldots
     \sum_{m_k=1}^{n-m_1-\ldots-m_{k-1}}\frac{1}{(n-m_1-\ldots-m_k)!}\times\\
  &&\hskip-7mm\times\widehat{\mathfrak{A}}_{1+n-m_1-\ldots-m_k}(t,\{Y\},s+1,\ldots,s+n-m_1-\ldots-m_k)\times \nonumber\\
  &&\hskip-7mm\times\prod_{j=1}^k\sum_{k_2^j=0}^{m_j}\ldots\sum_{k^j_{s}=0}^{k^j_{s-1}}
     \prod_{i_j=1}^{s}\frac{1}{(k^j_{s+1-i_j}-k^j_{s+2-i_j})!}
     \widehat{\mathfrak{A}}_{1+k^j_{s+1-i_j}-k^j_{s+2-i_j}}(t,i_{j},\nonumber \\
  &&\hskip-7mm s+n-m_1-\ldots-m_j+1+k^j_{s+2-i_j},\ldots,s+n-m_1-\ldots-m_j+k^j_{s+1-i_j}).\nonumber
\end{eqnarray}
and as above in (\ref{skrr}) the next conventions are assumed: $k^j_1\equiv m_j,\, k^j_{s+1}\equiv 0$.

To prove the convergence of the series of marginal functional (\ref{f}) we use the estimate
of the cumulants $\mathfrak{A}_{1+n}(t)$ of groups of operators (\ref{Sspher}) \cite{GRS04}
\begin{eqnarray*}\label{est}
  &&\hskip-9mm\int_{(\mathbb{R}^{3}\times\mathbb{R}^{3})^{s+n}}dx_{1}\ldots dx_{s+n}
      \big|\mathfrak{A}_{1+n}(-t,\{Y\},X\setminus Y)\prod_{i=1}^{s+n}F_1(t,x_i)\big|\leq
      n!e^{n+2}\|F_1(t)\|_{L^1_1}^{s+n}.
\end{eqnarray*}
In consequence of this estimate, for $(1+n)th$-order renormalized evolution operator (\ref{skrr1})
the inequality holds
\begin{eqnarray}\label{EstNormOp}
   &&\hskip-9mm\int_{(\mathbb{R}^{3}\times\mathbb{R}^{3})^{s+n}}dx_{1}\ldots dx_{s+n}
      \big|\widetilde{\mathfrak{V}}_{1+n}(t,\{Y\},X\setminus Y)\prod_{i=1}^{s+n}F_1(t,x_i)\big|
      \leq n!\|F_1(t)\|_{L_1^1}^{s+n}\times\\
   &&\hskip-9mm\times\sum_{k=0}^{n}\sum_{m_1=1}^{n}\ldots\sum_{m_k=1}^{n-m_1-\ldots-m_{k-1}}
      e^{n-m_1-\ldots-m_k+2}\prod_{j=1}^k\sum_{k_2^j=0}^{m_j}\ldots\sum_{k^j_{s}=0}^{k_{s-1}^j}
      \prod_{i_j=1}^{s}e^{k^j_{s+1-i_j}-k^j_{s+2-i_j}+2}.\nonumber
\end{eqnarray}

So far as it is redefined that $k^j_1\equiv m_j,\,k^j_{s+1}\equiv 0$, we have
\begin{eqnarray}\label{equal1}
   &&\prod_{i_j=1}^{s}e^{k^j_{s+1-i_j}-k^j_{s+2-i_j}+2}=e^{m_j+2s}.
\end{eqnarray}

Then the following equality is true
\begin{eqnarray}\label{equal2}
   &&\sum_{k_2^j=0}^{m_j}\ldots \sum_{k^j_{s}=0}^{k_{s-1}^j}1=
     \frac{1}{(s-1)!}(m_j+1)(m_j+2)\ldots(m_j+s-1).
\end{eqnarray}
Indeed, using the equality (in case of $m=0$ it is the sum of the arithmetical progression)
\begin{eqnarray}\label{RecEqual}
   &&\sum_{k=1}^n\,k(k+1)\ldots(k+m)=\frac{1}{m+2}n(n+1)\ldots(n+m+1),
\end{eqnarray}
as a result of the sequential computation of sums we obtain
\begin{eqnarray*}
   &&\sum_{k_2^j=0}^{m_j}\ldots\sum_{k^j_{s}=0}^{k_{s-1}^j}1=
      \sum_{k_2^j=0}^{m_j}\ldots\sum_{k^j_{s-1}=0}^{k_{s-2}^j}(k_{s-1}^j+1)=
      \sum_{k_2^j=0}^{m_j}\ldots\sum_{k^j_{s-2}=0}^{k_{s-3}^j}\frac{(k_{s-2}^j+1)(k_{s-2}^j+2)}{2}= \\
   &&=\ldots=\frac{1}{(s-1)!}(m_j+1)(m_j+2)\ldots(m_j+s-1).
\end{eqnarray*}

According to estimates (\ref{equal1}) and (\ref{equal2}), we have
\begin{eqnarray}\label{inequal1}
   &&\hskip-9mm\prod_{j=1}^k\sum_{k_2^j=0}^{m_j}\ldots\sum_{k^j_{s}=0}^{k_{s-1}^j}\,
      \prod_{i_j=1}^{s}e^{k^j_{s+1-i_j}-k^j_{s+2-i_j}+2}=
      \prod_{j=1}^k\,e^{m_j+2s}\,\frac{(m_j+1)(m_j+2)\ldots(m_j+s-1)}{(s-1)!}\leq \\
   &&\hskip-9mm\leq\prod_{j=1}^k e^{m_j+2s}\frac{(m_j+s-1)^{s-1}}{(s-1)!}\leq \prod_{j=1}^k\,e^{2m_j+3s-1}. \nonumber
\end{eqnarray}

Thus, in consequence of estimate (\ref{inequal1}) inequality (\ref{EstNormOp}) takes the form
\begin{eqnarray}\label{EstNormOp1}
   &&\hskip-9mm\int_{(\mathbb{R}^{3}\times\mathbb{R}^{3})^{s+n}}dx_{1}\ldots dx_{s+n}
      \big|\widetilde{\mathfrak{V}}_{1+n}(t,\{Y\},X\setminus Y)\prod_{i=1}^{s+n}F_1(t,x_i)\big|\leq \\
   &&\hskip-9mm\leq n!\|F_1(t)\|_{L_1^1}^{s+n}e^{n+2}\sum_{k=0}^{n}e^{k(3s-1)}
      \sum_{m_1=1}^{n}\ldots\sum_{m_k=1}^{n-m_1-\ldots-m_{k-1}}e^{m_1+\ldots+m_k}.\nonumber
\end{eqnarray}

Let us estimate the sums $\sum_{m_1=1}^{n}\ldots\sum_{m_k=1}^{n-m_1-\ldots-m_{k-1}}e^{m_1+\ldots+m_k}$.
Calculating the sum over index $m_k$ as the sum of a geometric progression, we obtain
\begin{eqnarray*}
   &&\sum_{m_1=1}^{n}\ldots\sum_{m_k=1}^{n-m_1-\ldots-m_{k-1}}e^{m_1+\ldots+m_k}\leq
     e^{n+1}\sum_{m_1=1}^{n}\ldots\sum_{m_{k-1}=1}^{n-m_1-\ldots-m_{k-2}}1.
\end{eqnarray*}
Then, applying equality (\ref{RecEqual}), as a result of the sequential computation of sums the
following inequality holds
\begin{eqnarray*}\label{est1}
   &&\sum_{m_1=1}^{n}\ldots\sum_{m_k=1}^{n-m_1-\ldots-m_{k-1}}e^{m_1+\ldots+m_k}\leq
     e^{n+1}\sum_{m_1=1}^{n}\ldots\sum_{m_{k-2}=1}^{n-m_1-\ldots-m_{k-3}}(n-m_1-\ldots-m_{k-2})=\\
   &&=\ldots=e^{n+1}\frac{(n-k+2)\ldots(n-1)n}{(k-1)!}.
\end{eqnarray*}
According to the last inequality, for the majorant of estimate (\ref{EstNormOp1}) we get
\begin{eqnarray}\label{est12}
   &&\hskip-9mm\sum_{k=0}^ne^{k(3s-1)}\sum_{m_1=1}^{n}
      \ldots\sum_{m_k=1}^{n-m_1-\ldots-m_{k-1}}e^{m_1+\ldots+m_k}\leq \\
   &&\hskip-9mm\leq 1+e^{n+3s}+ e^{n+1}\sum_{k=2}^ne^{k(3s-1)}\frac{(n-k+2)\ldots(n-1)n}{(k-1)!}\leq\nonumber \\
   &&\hskip-9mm\leq1+e^{n+3s}+e^{n+1}\sum_{k=2}^ne^{k(3s-1)}\frac{n^{k-1}}{(k-1)!}
      \leq 1+e^{2n+1}\sum_{k=1}^n e^{k(3s-1)}.\nonumber
\end{eqnarray}

Thus, in consequence of inequalities (\ref{EstNormOp1}) and (\ref{est12}) for marginal functionals of the
state (\ref{ff}) the following estimate holds
\begin{eqnarray}\label{inequalF}
   &&\hskip-9mm\big\|F_s(t\mid F_1(t))\big\|_{L^1_{s}}\leq
      \sum_{n=0}^{\infty}\big\|F_1(t)\big\|_{L_1^1}^{s+n}e^{n+2}+
      \sum_{n=1}^{\infty}\big\|F_1(t)\big\|_{L_1^1}^{s+n}e^{3n+3}\sum_{k=1}^n e^{k(3s-1)}\leq \\
   &&\leq \big\|F_1(t)\big\|_{L_1^1}^{s}e^{2}\sum_{n=0}^{\infty}\big\|F_1(t)\big\|_{L_1^1}^{n}e^{n}+
      \big\|F_1(t)\big\|_{L_1^1}^{s}e^{3s+2}\sum_{n=1}^{\infty}\big\|F_1(t)\big\|_{L_1^1}^{n}e^{(3s+2)n}. \nonumber
\end{eqnarray}
Hence, the sequence of marginal functionals of the state $F_s(t,x_1,\ldots,x_s\mid F_1(t)),\,s\geq2$, exists and
represents by converged series (\ref{f}) provided that
\begin{eqnarray}\label{estFunc1}
  &&\|F_1(t)\|_{L^{1}(\mathbb{R}^3\times\mathbb{R}^3)}<e^{-(3s+2)}.
\end{eqnarray}

On the other hand, owing the estimate for series (\ref{F(t)1})
\begin{eqnarray}\label{estFunc12}
   &&\|F_1(t)\|_{L^{1}(\mathbb{R}^3\times\mathbb{R}^3)}
      \leq \|F_1^0\|_{L^{1}(\mathbb{R}^3\times\mathbb{R}^3)}e^{2}\sum_{n=0}^{\infty}e^n
      \|F_1^0\|_{L^{1}(\mathbb{R}^3\times\mathbb{R}^3)}^n,
\end{eqnarray}
and according to estimates (\ref{estFunc1}) and (\ref{estFunc12}), we conclude that the Enskog collision
integral exists under the following condition on initial data
\begin{eqnarray}\label{estCon}
   &&\|F_1^0\|_{L^{1}(\mathbb{R}^3\times\mathbb{R}^3)}<\frac{e^{-10}}{1+e^{-9}}.
\end{eqnarray}

%%%%%%%%%%%%%%%%%%%%%%%%%%%%%%%%%%%%%%%%%%%%%%%%%%%%%%%%%%%%%%%%%%%%%%%%%%%%%%%%%%%%%%%%%%%%%%%%%%%%%%%%%%%%%%%%%%%%%%%%

\section{Some properties of the collision integral}

\subsection{The Markovian generalized Enskog equation}
To state the Markovian generalized Enskog equation we first represent the collision integral $\mathcal{I}_{GEE}$
of the generalized Enskog equation (\ref{gke1}) as an expansion with respect to the Boltzman-Enskog collision
integral (in case of $t>0$)
\begin{eqnarray*}
  &&\hskip-7mm\mathcal{I}_{BEE}\equiv\sigma^2\int_{\mathbb{R}^3\times\mathbb{S}^2_+}d p_2 d\eta
     \langle\eta,(p_1-p_2)\rangle
     \big(F_1(t,q_1,p^*_1)F_1(t,q_1-\sigma\eta,p^*_2)-F_1(t,x_1)F_1(q_1+\sigma\eta,p_2)\big),\nonumber
\end{eqnarray*}
where the momenta $p_{1}^{*},p_{2}^{*}$ are defined by equalities (\ref{eq:momenta}). We observe that such
expansion of the collision integral $\mathcal{I}_{GEE}$ is given in terms of marginal correlation functionals
(\ref{cf}). Indeed, in view of that it holds: $(\widehat{\mathfrak{A}}_{1}(t,\{1^{\sharp},2^{\sharp}_{\pm}\})-I)
F_{1}(t,q_1,p_1^{\sharp})F_{1}(t,q_1\pm\sigma\eta,p_2^{\sharp})=0$, we have
\begin{eqnarray}\label{GEE}
  &&\mathcal{I}_{GEE}=\mathcal{I}_{BEE}+\sum_{n=1}^{\infty}\mathcal{I}^{(n)}_{GEE}\equiv\\
  &&\equiv\mathcal{I}_{BEE}+\sigma^2\sum_{n=1}^{\infty}\frac{1}{n!}
     \int_{\mathbb{R}^3\times\mathbb{S}^2_+}d p_2 d\eta
     \int_{(\mathbb{R}^{3}\times\mathbb{R}^{3})^{n}}dx_3\ldots dx_{n+2}\langle\eta,(p_1-p_2)\rangle\times\nonumber\\
  &&\times\big(\mathfrak{V}_{1+n}(t,\{1^{*},2^{*}_{-}\},3,\ldots,n+2)
     F_1(t,q_1,p_1^{*})F_1(t,q_1-\sigma\eta,p_2^{*})\prod_{i=3}^{n+2}F_{1}(t,x_i)-\nonumber\\
  &&-\mathfrak{V}_{1+n}(t,\{1,2_{+}\},3,\ldots,n+2)F_1(t,x_1)
     F_1(t,q_1+\sigma\eta,p_2)\prod_{i=3}^{n+2}F_{1}(t,x_i)\big)\nonumber,
\end{eqnarray}
where notations accepted in equation (\ref{gke1}) are used.

On the kinetic (macroscopic) scale the typical length for the kinetic phenomena is the mean free path.
Then, observing that in the kinetic scale of the variation of variables \cite{CIP},\cite{Sp80} the
groups of operators (\ref{Sspher}) of finitely many hard spheres describe a fast evolutionary processes
and hence they depend on microscopic time variable $\varepsilon^{-1}t$, where $\varepsilon\geq0$ is a scale
parameter (the ratio of the collision time to the mean free time), the dimensionless marginal functionals
of the state are represented in the form: $F_{s}\big(\varepsilon^{-1}t,x_1,\ldots,x_s\mid F_{1}(t)\big),\,s\geq2$.
In the formal limit $\varepsilon\rightarrow0$, (the Markovian approximation) the limit marginal functional of the state
$F_{s}(x_1,\ldots,x_s\mid F_{1}(t))$ is represented by expansion (\ref{f}) (and also (\ref{cf})) with the limit
generating evolution operators $\lim_{\varepsilon\rightarrow0}\mathfrak{V}_{1+n}(\varepsilon^{-1}t),\,n\geq0$,
for example,
\begin{eqnarray*}
   &&\lim\limits_{\varepsilon\rightarrow0}\mathfrak{V}_{1}(\varepsilon^{-1}t,\{Y\})=
      \lim\limits_{\varepsilon\rightarrow0}\widehat{\mathfrak{A}}_{1}(\varepsilon^{-1}t,\{Y\}).
\end{eqnarray*}

We note that the Markovian limit of the first two terms of expansions (\ref{f}) of dimensionless marginal
functionals of the state coincide with corresponding terms of particular solution of the BBGKY hierarchy
constructed by N.N. Bogolyubov \cite{CGP97},\cite{B46},\cite{UF} on basis of the perturbation method with
the use of the condition of weakening of correlation.

In the formal Markovian limit $\varepsilon \rightarrow 0$ from the generalized Enskog equation (\ref{gke1})
we derive the Markovian generalized Enskog equation
\begin{eqnarray}\label{MarkGke1}
  &&\hskip-7mm\frac{\partial}{\partial t}F_{1}(t,x_1)=
    -\langle p_1,\frac{\partial}{\partial q_1}\rangle F_{1}(t,x_1)+\mathcal{I}_{BEE}+\\
  &&\hskip-7mm+\sigma^2\int_{\mathbb{R}^3\times\mathbb{S}^2_+}d p_2 d\eta
     \langle\eta,(p_1-p_2)\rangle\sum_{n=1}^{\infty}\frac{1}{n!}
     \int_{(\mathbb{R}^{3}\times\mathbb{R}^{3})^{n}}dx_3\ldots dx_{n+2}
     \big(\mathfrak{V}_{1+n}(\{1^{*},2^{*}_{-}\},3,\ldots\nonumber\\
  &&\hskip-7mm\ldots,n+2)F_1(t,q_1,p_1^{*})F_1(t,q_1-\sigma\eta,p_2^{*})\prod_{i=3}^{n+2}F_{1}(t,x_i)-\nonumber\\
  &&\hskip-7mm-\mathfrak{V}_{1+n}(\{1,2_{+}\},3,\ldots,n+2)F_1(t,x_1)
     F_1(t,q_1+\sigma\eta,p_2)\prod_{i=3}^{n+2}F_{1}(t,x_i)\big),\nonumber
\end{eqnarray}
where by $\mathfrak{V}_{1+n}(\{1^{\sharp},2^{\sharp}_{\pm}\},3,\ldots,n+2)$ we denote as above the limit generating
evolution operators: $\mathfrak{V}_{1+n}=\lim_{\varepsilon\rightarrow0}\mathfrak{V}_{1+n}(\varepsilon^{-1}t),\,n\geq0$.

Thus, in the kinetic scale of the variation of variables the structure of the collision integral of the generalized
Enskog equation (\ref{MarkGke1}) describes all possible correlations which are created by hard sphere dynamics and
by the propagation of initial correlations stipulated by forbidden configurations.

We remark that for the first time two terms of the Markovian collision integral in case of smooth interaction
potentials was established by N.N. Bogolyubov \cite{B46} within the framework of the perturbation theory and
using an analogy with the virial equilibrium expansions, these terms of the collision integral in non-perturbative
form were formulated by M.S. Green and R.A. Piccirelli \cite{GP63},\cite{Gre56} and by E.G.D. Cohen \cite{C}.

To derive the Bogolyubov collision integral for hard spheres from (\ref{MarkGke1}) we apply analogs of the
Duhamel equations \cite{BA} to generating operators of marginal correlation functionals (\ref{cf}). Indeed,
if $f_3\in L^1(\mathbb{R}^{9}\times(\mathbb{R}^{9}\setminus \mathbb{W}_3))$, for the scattering cumulant
$\widehat{\mathfrak{A}}_{2}(\varepsilon^{-1}t,{1,2},3)$ (\ref{scacu}) an analog of the Duhamel equation
holds \cite{Pe08}
\begin{eqnarray*}\label{Duam}
  &&\widehat{\mathfrak{A}}_{2}(\varepsilon^{-1}t,\{1,2\},3)f_3(x_1,x_2,x_3)=
      \int_0^{\varepsilon^{-1}t}d\tau S_2(-\tau,1,2)S_1(-\tau,3)
      \sum_{i_1=1}^2\mathcal{L}^{*}_{\mathrm{int}}(i_1,3)\times\\
  &&\times S_3(-\varepsilon^{-1}t+\tau,1,2,3)\mathcal{X}_{\mathbb{R}^{9}\setminus \mathbb{W}_{3}}
      \prod_{i_2=1}^{3}S_1(\varepsilon^{-1}t,i_2)f_3(x_1,x_2,x_3),\nonumber
\end{eqnarray*}
and, consequently, for the first term of the collision integral expansion we obtain
\begin{eqnarray*}
   &&\mathcal{I}^{(1)}_{GEE}=\mathcal{I}^{(1)}_{GEE,+}-\mathcal{I}^{(1)}_{GEE,-},
\end{eqnarray*}
where we denote by $\mathcal{I}^{(1)}_{GEE,\pm}$ the following parts of the collision integral
$\mathcal{I}^{(1)}_{GEE}$
\begin{eqnarray*}\label{BogEq}
   &&\hskip-7mm\mathcal{I}^{(1)}_{GEE,\pm}=\sigma^2\int_{\mathbb{R}^3\times\mathbb{S}^2_+}d p_2d\eta
      \langle\eta,(p_1-p_2)\rangle\int_{\mathbb{R}^3\times\mathbb{R}^3}dx_3
      \int_0^{\varepsilon^{-1}t}d\tau S_2(-\tau,1^{\sharp},2^{\sharp}_{\pm})
      \big((\mathcal{L}^{*}_{\mathrm{int}}(1^{\sharp},3)+\\
   &&\hskip-7mm+\mathcal{L}^{*}_{\mathrm{int}}(2^{\sharp}_{\pm},3))
     S_3(-\varepsilon^{-1}t+\tau,1^{\sharp},2^{\sharp}_{\pm},3)
      \mathcal{X}_2(q_1,q_3)\mathcal{X}_2(q_1\pm\sigma\eta,q_3)\times\\
   &&\hskip-7mm \times S_1(\varepsilon^{-1}t-\tau,1^{\sharp})
      S_1(\varepsilon^{-1}t-\tau,2^{\sharp}_{\pm})S_1(\varepsilon^{-1}t-\tau,3)-S_2(-\varepsilon^{-1}t+\tau,1^{\sharp},2^{\sharp}_{\pm})
      S_1(\varepsilon^{-1}t-\tau,1^{\sharp})\times\\
   &&\hskip-7mm\times S_1(\varepsilon^{-1}t-\tau,2^{\sharp}_{\pm})
     (\mathcal{L}^{*}_{\mathrm{int}}(1^{\sharp},3)S_2(-\varepsilon^{-1}t+\tau,1^{\sharp},3)
      \mathcal{X}_2(q_1,q_3)S_1(\varepsilon^{-1}t-\tau,1^{\sharp})S_1(\varepsilon^{-1}t-\tau,3) \\
   &&\hskip-7mm+\mathcal{L}^{*}_{\mathrm{int}}(2^{\sharp}_{\pm},3)
      S_2(-\varepsilon^{-1}t+\tau,2^{\sharp}_{\pm},3)\mathcal{X}_2(q_1\pm\sigma\eta,q_3)
      S_1(\varepsilon^{-1}t-\tau,2^{\sharp}_{\pm})\times\\
   &&\hskip-7mm\times S_1(\varepsilon^{-1}t-\tau,3))\big)S_1(\tau,1^{\sharp})S_1(\tau,2^{\sharp}_{\pm})S_1(\tau,3)
      F_1(t,q_1,p_1^{\sharp})F_1(t,q_1\pm\sigma\eta,p_2^{\sharp})F_1(t,x_3).\nonumber
\end{eqnarray*}
In the Markovian limit $\varepsilon\rightarrow0$ this expression coincides with the corresponding correction
to the Boltzmann-Enskog collision integral constructed by the perturbation method with the use of the
weakening of correlation condition in \cite{B46},\cite{UF}.

\subsection{To the classification of the Enskog-type kinetic equations}
We establish links of the Markovian generalized Enskog equation (\ref{MarkGke1}) with the revised
Enskog equation \cite{BLPT,Pol,Pol1}. The classification and links of the Enskog-type kinetic equations
derived upon some heuristic arguments is given in \cite{BLPT},\cite{BL}.

The collision integral of the revised Enskog equation has the form \cite{BLPT,Pol,Pol1}
\begin{eqnarray*}
  &&\hskip-8mm\mathcal{I}_{REE}=\sigma^2\int_{\mathbb{R}^3\times \mathbb{S}^2_+}d p_2 d\eta
    \langle\eta,(p_1-p_2)\rangle \big(g_2(q_1,q_1-\sigma\eta\mid F_1(t))F_1(t,q_1,p_1^*)F_1(t,q_1-\sigma\eta,p_2^*)-\\
  &&\hskip-8mmg_2(q_1,q_1+\sigma\eta\mid F_1(t))F_1(t,q_1,p_1)F_1(t,q_1+\eta\sigma,p_2)\big),\nonumber
\end{eqnarray*}
where the functional $g_2(q_1,q_2\mid F_1(t))$ is defined in terms of the formal Mayer cluster expansion
\begin{eqnarray}\label{ClustExp}
  &&g_2(q_1,q_2\mid F_1(t))=\\
  &&=\mathcal{X}_2(q_1,q_2)\big(1+\sum_{k=1}^{\infty}\frac{1}{k!}
     \int_{(\mathbb{R}^3\times\mathbb{R}^3)^k}d x_3\ldots d x_{k+2}V_{1+k}(\{q_1,q_2\},q_3,\ldots,q_{k+2})
     \prod_{i=3}^{k+2} F_1(t,x_i)\big),\nonumber
\end{eqnarray}
and $V_{1+k}(\{q_1,q_2\},q_3,\ldots,q_{k+2})$ is the sum of all graphs of $k$ labeled points which are
biconnected when the Mayer function $f(q_1,q_2)\equiv\mathcal{X}_2(q_1,q_2)-1$ is added,
$\mathcal{X}_2(q_1,q_2)\equiv\mathcal{X}_{\mathbb{R}^{6}\setminus \mathbb{W}_{2}}$ is the Heaviside step
function of allowed configurations of two hard spheres. Taking into account that the Boltzmann-Enskog collision
integral is defined by the first term of expansion (\ref{ClustExp}), we denote the collision integral
of the revised Enskog equation in the form of the corresponding expansion
\begin{eqnarray}\label{REE}
  &&\mathcal{I}_{REE}\equiv\mathcal{I}_{BEE}+
      \sum_{k=1}^{\infty}(\mathcal{I}^{(k)}_{REE,+}-\mathcal{I}^{(k)}_{REE,-}),
\end{eqnarray}
where
\begin{eqnarray*}
  &&\hskip-7mm\mathcal{I}^{(k)}_{REE,\pm}\doteq\sigma^2\int_{\mathbb{R}^3\times \mathbb{S}^2_+}d p_2 d\eta
    \langle\eta,(p_1-p_2)\rangle \int_{\mathbb{R}^{3k}}d q_3\ldots d q_{k+2}
    V(\{q_1,q_1\pm\sigma\eta\},q_3,\ldots,q_{k+2})\times\\
  &&\hskip-7mm\times F_1(t,q_1,p_1^{\sharp})F_1(t,q_1\pm\sigma\eta,p_2^{\sharp})\prod_{i=2}^{k+2}F_1(t,x_i).
\end{eqnarray*}

We compare term by term collision integral (\ref{REE}) of the revised Enskog equation and the collision
integral (\ref{GEE}) of the Markovian generalized Enskog equation (\ref{MarkGke1}) by the example of the
first terms of these expansions. In case of collision integral (\ref{REE}) this term has the form
\begin{eqnarray}\label{REE1}
  &&\mathcal{I}^{(1)}_{REE,\pm}=\sigma^2\int_{\mathbb{R}^3\times \mathbb{S}^2_+}d p_2 d\eta
     \langle\eta,(p_1-p_2)\rangle\int_{\mathbb{R}^3\times\mathbb{R}^3}dx_3
     f(q_1,q_3)f(q_1\pm\sigma\eta,q_3)\times\\
  && \times F_1(t,q_1,p_1^{\sharp})F_1(t,q_1\pm\sigma\eta,p_2^{\sharp})F_1(t,x_3)=\nonumber\\
  &&=\sigma^2\int_{\mathbb{R}^3\times \mathbb{S}^2_+}d p_2 d\eta
     \langle\eta,(p_1-p_2)\rangle\int_{\mathbb{R}^3\times\mathbb{R}^3}dx_3
     \big(\mathcal{X}_2(q_1,q_3)\mathcal{X}_2(q_1\pm\sigma\eta,q_3)-\nonumber \\
  &&-\mathcal{X}_2(q_1,q_3)-\mathcal{X}_2(q_1\pm\sigma\eta,q_3)+1\big)
     F_1(t,q_1,p_1^{\sharp})F_1(t,q_1\pm\sigma\eta,p_2^{\sharp})F_1(t,x_3),\nonumber
\end{eqnarray}
and in case of collision integral (\ref{GEE}) the corresponding term has the following form
\begin{eqnarray}\label{GEE1}
   &&\hskip-7mm\mathcal{I}^{(1)}_{GEE,\pm}=\sigma^2\int_{\mathbb{R}^3\times\mathbb{S}^2_+}d p_2 d\eta
      \langle\eta,(p_1-p_2)\rangle\int_{\mathbb{R}^3\times\mathbb{R}^3}dx_3\lim\limits_{\varepsilon\rightarrow 0}
      \big(\widehat{\mathfrak{A}}_{2}(\varepsilon^{-1}t,\{1^{\sharp},2^{\sharp}_{\pm}\},3)-\nonumber \\
    &&\hskip-7mm-\widehat{\mathfrak{A}}_{1}(\varepsilon^{-1}t,\{1^{\sharp},2^{\sharp}_{\pm}\})
      \widehat{\mathfrak{A}}_{2}(\varepsilon^{-1}t,1^{\sharp},3)-
      \widehat{\mathfrak{A}}_{1}(\varepsilon^{-1}t,\{1^{\sharp},2^{\sharp}_{\pm}\})
      \widehat{\mathfrak{A}}_{2}(\varepsilon^{-1}t,2^{\sharp}_{\pm},3)\big)\times\\
    &&\hskip-7mm\times F_1(t,q_1,p_1^{\sharp})F_1(t,q_1\pm\sigma\eta,p_2^{\sharp})F_1(t,x_3),\nonumber
\end{eqnarray}
where we use notations accepted in equation (\ref{gke1}) and the evolution operator
$\widehat{\mathfrak{A}}_{n}(\varepsilon^{-1}t)$ is a corresponding order scattering
cumulant (\ref{scacu}).

We transform expression (\ref{GEE1}) to the form adopted to the structure of expression (\ref{REE1}).
With that end in view we introduce the notion of the scattering operator
\begin{eqnarray*}\label{so}
   &&\widehat{S}_n(1,\ldots,n)=\lim\limits_{\varepsilon\rightarrow 0}S_n(-\varepsilon^{-1}t,1,\ldots,n)
      \prod_{i=1}^{n}S_1(\varepsilon^{-1}t,i), \quad n\geq1,
\end{eqnarray*}
where the evolution operator $S_n(-\varepsilon^{-1}t)$ is defined by (\ref{Sspher}).
Then, according to definition (\ref{scacu}) of scattering cumulants and definition (\ref{Sspher}), for the first
summand in expression (\ref{GEE1}) the following equality holds
\begin{eqnarray*}
   &&\lim\limits_{\varepsilon\rightarrow 0}\widehat{\mathfrak{A}}_{2}(\varepsilon^{-1}t,\{1^{\sharp},2^{\sharp}_{\pm}\},3)=
    \big(f(q_1,q_3)f(q_1\pm\sigma\eta,q_3)+f(q_1,q_3)+f(q_1\pm\sigma\eta,q_3)+\\
   &&+1\big)\widehat{S}_3(1^{\sharp},2^{\sharp}_{\pm},3)-
     \lim\limits_{\varepsilon\rightarrow 0}S_2(-\varepsilon^{-1}t,1^{\sharp},2^{\sharp}_{\pm})S_1(-\varepsilon^{-1}t,3)
     \big(f(q_1,q_3)f(q_1\pm\sigma\eta,q_3)+\\
   &&+f(q_1,q_3)+f(q_1\pm\sigma\eta,q_3)+1\big)S_1(\varepsilon^{-1}t,1^{\sharp})
      S_1(\varepsilon^{-1}t,2^{\sharp}_{\pm})S_1(\varepsilon^{-1}t,3),
\end{eqnarray*}
where we take into consideration that $S_{3}(\varepsilon^{-1}t,1^{\sharp},2^{\sharp}_{\pm},3)\big(f(q_1,q_3)f(q_1\pm\sigma\eta,q_3)+f(q_1,q_3)+f(q_1\pm\sigma\eta,q_3)\big)=0$.
For other summands in expression (\ref{GEE1}) it holds, respectively
\begin{eqnarray*}
   &&\lim\limits_{\varepsilon\rightarrow 0}\widehat{\mathfrak{A}}_{1}(\varepsilon^{-1}t,\{1^{\sharp},2^{\sharp}_{\pm}\})
      \widehat{\mathfrak{A}}_{2}(\varepsilon^{-1}t,1^{\sharp},3)=
      \widehat{S}_2(1^{\sharp},2^{\sharp}_{\pm})\big(f(q_1,q_3)+1\big)\widehat{S}_2(1^{\sharp},3)-\\
   &&-\lim\limits_{\varepsilon\rightarrow 0}S_2(-\varepsilon^{-1}t,1^{\sharp},2^{\sharp}_{\pm})S_1(-\varepsilon^{-1}t,3)
     \big(f(q_1,q_3)+1\big)S_1(\varepsilon^{-1}t,1^{\sharp})S_1(\varepsilon^{-1}t,2^{\sharp}_{\pm})S_1(\varepsilon^{-1}t,3),\\ \\
   &&\lim\limits_{\varepsilon\rightarrow 0}\widehat{\mathfrak{A}}_{1}(\varepsilon^{-1}t,\{1^{\sharp},2^{\sharp}_{\pm}\})
      \widehat{\mathfrak{A}}_{2}(\varepsilon^{-1}t,2^{\sharp}_{\pm},3)=
      \widehat{S}_2(1^{\sharp},2^{\sharp}_{\pm})\big(f(q_1\pm\sigma\eta,q_3)+1\big)\widehat{S}_2(2^{\sharp}_{\pm},3)-\\
   &&-\lim\limits_{\varepsilon\rightarrow 0}S_2(-\varepsilon^{-1}t,1^{\sharp},2^{\sharp}_{\pm})S_1(-\varepsilon^{-1}t,3)
     \big(f(q_1\pm\sigma\eta,q_3)+1\big)S_1(\varepsilon^{-1}t,1^{\sharp})S_1(\varepsilon^{-1}t,2^{\sharp}_{\pm})S_1(\varepsilon^{-1}t,3).
\end{eqnarray*}

As a result of the validity of these equalities, taking into account that $\widehat{S}_2(1^{\sharp},2^{\sharp}_{\pm})=I$ is
the identity operator and, owing to the following equality
\begin{eqnarray*}
   &&\int_{\mathbb{R}^3\times\mathbb{R}^3}dx_3\lim\limits_{\varepsilon\rightarrow 0}
     S_2(-\varepsilon^{-1}t,1^{\sharp},2^{\sharp}_{\pm})S_1(-\varepsilon^{-1}t,3)
     f(q_1,q_3)f(q_1\pm\sigma\eta,q_3)\times\\
   &&\times S_1(\varepsilon^{-1}t,1^{\sharp})S_1(\varepsilon^{-1}t,2^{\sharp}_{\pm})S_1(\varepsilon^{-1}t,3)
     F_1(t,q_1,p_1^{\sharp})F_1(t,q_1\pm\sigma\eta,p_2^{\sharp})F_1(t,x_3)=0,
\end{eqnarray*}
expression (\ref{GEE1}) takes the form
\begin{eqnarray}\label{GEE2}
   &&\mathcal{I}^{(1)}_{GEE,\pm}=\sigma^2\int_{\mathbb{R}^3\times\mathbb{S}^2_+}d p_2 d\eta
      \langle\eta,(p_1-p_2)\rangle\int_{\mathbb{R}^3\times\mathbb{R}^3}dx_3
      \big(\mathcal{X}_2(q_1,q_3)\mathcal{X}_2(q_1\pm\sigma\eta,q_3)\times \\
   &&\times \widehat{S}_{3}(1^{\sharp},2^{\sharp}_{\pm},3)-\mathcal{X}_2(q_1,q_3)\widehat{S}_2(1^{\sharp},3)-
       \mathcal{X}_2(q_1\pm\sigma\eta,q_3)\widehat{S}_2(2^{\sharp}_{\pm},3)+1\big)\times \nonumber \\
   &&\times F_1(t,q_1,p_1^{\sharp})F_1(t,q_1\pm\sigma\eta,p_2^{\sharp})F_1(t,x_3).\nonumber
\end{eqnarray}
It is clear that last expression (\ref{GEE2}) coincides with corresponding term (\ref{REE1}) of the collision
integral expansion of the revised Enskog equation, if we neglect the collisions which occur in the process
of the evolution of hard spheres.

As we note above in section 4.1 expansions of marginal functionals of the state (\ref{f}) are
nonequilibrium analog of the Mayer-Ursell expansions over powers of the density of equilibrium
distribution functions \cite{B46},\cite{UF,GP63}. Hence, the structures of expansions of collision
integral (\ref{REE}) of the revised Enskog equation and collision integral (\ref{GEE}) of the
Markovian generalized Enskog equation (\ref{MarkGke1}) are the same, so observations given for
the first terms of these collision integral expansions take place for every term. In fact every
term of collision integral (\ref{GEE}) give us the classification of correlations generated by
pair collisions of hard spheres.

Thus, in case of neglecting of collisions in particle clusters of more than two hard spheres,
the collision integral of the generalized Enskog kinetic equation
(\ref{MarkGke1}) transforms to collision integral (\ref{REE}) of the revised Enskog equation.

%%%%%%%%%%%%%%%%%%%%%%%%%%%%%%%%%%%%%%%%%%%%%%%%%%%%%%%%%%%%%%%%%%%%%%%%%%%%%%%%%%%%%%%%%%%%%%%%%%%%%%%%%%%%%%%%%%%%%%%%%%%%

\section{The Cauchy problem of the generalized Enskog equation}

\subsection{An existence theorem}
We consider the abstract Cauchy problem (\ref{gke1})-(\ref{2}) ((\ref{gke1n}),(\ref{2})) in the space
of integrable functions $L^{1}(\mathbb{R}^3\times\mathbb{R}^3)$. The following statement is true.
\begin{theorem}
A global in time solution of the Cauchy problem of the generalized Enskog equation (\ref{gke1})-(\ref{2})
((\ref{gke1n}),(\ref{2})) is determined by the expansion
\begin{eqnarray}\label{ske}
  &&\hskip-12mmF_{1}(t,x_1)=\sum\limits_{n=0}^{\infty}\frac{1}{n!}
     \int_{(\mathbb{R}^3\times\mathbb{R}^3)^n}dx_2\ldots dx_{n+1}\,
     \mathfrak{A}_{1+n}(-t,1,\ldots,n+1)\prod_{i=1}^{n+1}F_{1}^0(x_i)
     \mathcal{X}_{\mathbb{R}^{3(1+n)}\setminus \mathbb{W}_{1+n}},
\end{eqnarray}
where the cumulants $\mathfrak{A}_{1+n}(-t),\,n\geq0,$ of groups of operators (\ref{Sspher}) are
defined by formula (\ref{nLkymyl}). If $\|F_1^0\|_{L^{1}(\mathbb{R}^3\times\mathbb{R}^3)}<e^{-10}(1+e^{-9})^{-1}$,
then for $F_1^0\in{L}^{1}_{0}(\mathbb{R}^3\times\mathbb{R}^3)$ it is a strong solution
and for an arbitrary initial data $F_1^{0}\in L^{1}(\mathbb{R}^3\times\mathbb{R}^3)$ it is a weak solution.
\end{theorem}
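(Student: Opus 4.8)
The plan is to recognize that expansion (\ref{ske}) is precisely the one-particle distribution function (\ref{F(t)1}), i.e.\ the $s=1$ component of the nonperturbative solution (\ref{F(t)}) of the BBGKY hierarchy, and then to verify that this function solves the initial-value problem (\ref{gke1})--(\ref{2}). First I would establish convergence: by the uniform-in-$n$ bound on the cumulants $\mathfrak{A}_{1+n}(-t)$ of groups (\ref{Sspher}) recalled in Section~4.2, the series (\ref{ske}) converges in the norm of $L^1(\mathbb{R}^3\times\mathbb{R}^3)$ for every $t\in\mathbb{R}$ once $\|F_1^0\|_{L^1}<e^{-1}$, and the initial condition (\ref{2}) holds because at $t=0$ the first-order cumulant reduces to the identity while all higher cumulants vanish, so $F_1(0,x_1)=F_1^0(x_1)$.

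Next I would show that (\ref{ske}) satisfies the equation. This is exactly the computation carried out in Section~3.2: differentiating (\ref{ske}) term by term in $t$ and using the three generator limits for the cumulants (the norm-convergent relations recalled there) produces the intermediate identity (\ref{de}); then expanding the cumulants into the kinetic cluster expansions (\ref{rrrl2}) with $s=2$ and applying the product formula (\ref{Pr}) rewrites the second term of (\ref{de}) entirely in terms of $F_1(t)$, yielding the collision integral of (\ref{gke1}) (respectively (\ref{gke1n}) for $t<0$). Thus no new derivation is needed; the content of the theorem is that this formal solution is genuine, which reduces to two quantitative points.

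The first point is that the right-hand side of (\ref{gke1}) is well defined. Its collision integral is expressed through the marginal functional $F_2(t\mid F_1(t))$ given by (\ref{f}) with $s=2$, whose series converges, by estimate (\ref{inequalF}), precisely when $\|F_1(t)\|_{L^1}<e^{-(3\cdot2+2)}=e^{-8}$, i.e.\ condition (\ref{estFunc1}) with $s=2$. Combining this with the growth bound (\ref{estFunc12}) for $\|F_1(t)\|_{L^1}$ in terms of the initial datum yields the threshold (\ref{estCon}), namely $\|F_1^0\|_{L^1}<e^{-10}(1+e^{-9})^{-1}$, under which both (\ref{ske}) converges and its collision integral exists for all $t$, uniformly in time. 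Matching these two convergence radii is the delicate step, and it is where the combinatorial estimates (\ref{equal1})--(\ref{inequal1}) on the renormalized operators $\widetilde{\mathfrak{V}}_{1+n}(t)$ do the work.

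The second point is the regularity claim. For $F_1^0\in L^1_0$ the generator limits for the cumulants hold in the norm of $L^1$, so the term-by-term time derivative of (\ref{ske}) converges in $L^1$ and the identity (\ref{de})---hence (\ref{gke1})---holds as an equality of $L^1$-functions, giving a strong solution. For arbitrary $F_1^0\in L^1$ one would instead pair the equation with a test observable and pass the derivative onto the dual group, obtaining a weak solution, the passage being justified by approximating $F_1^0$ in $L^1$ by elements of $L^1_0$ and using the continuity of all operators involved. I expect the main obstacle to be the rigorous interchange of the time differentiation with the infinite summation and the $x_2,\dots,x_{n+2}$ integrations in the collision term: controlling this requires the uniform-in-$n$ cumulant bound recalled in Section~4.2 together with (\ref{inequal1}), and it is the only place where the full strength of the kinetic cluster expansion, rather than its formal algebra, is used.
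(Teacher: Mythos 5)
Your proposal is correct and follows the paper's own route in all essentials: identifying (\ref{ske}) with the $s=1$ BBGKY solution (\ref{F(t)1}), reusing the differentiation and kinetic-cluster-expansion computation of Section 3.2 (equality (\ref{de}), expansions (\ref{rrrl2}) with $s=2$, product formula (\ref{Pr})) for the strong-solution claim, and tracing the threshold $e^{-10}(1+e^{-9})^{-1}$ from (\ref{inequalF}), (\ref{estFunc1}) with $s=2$, and (\ref{estFunc12}), exactly as the paper does via (\ref{estCon}). The one place you genuinely diverge is the weak-solution argument for arbitrary $F_1^{0}\in L^{1}$: you propose to dualize \emph{and} to approximate $F_1^{0}$ by elements of $L^{1}_{0}$, invoking ``continuity of all operators involved.'' The paper never approximates. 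It works directly with the functional (\ref{funcc}), transfers the cumulants onto the test function as their adjoints $\mathfrak{A}_{1+n}(t)$, computes the $\ast$-weak derivative (\ref{d_funk-d}) --- where the generator limits now act on the continuously differentiable, compactly supported $\varphi_{1}$ through $\mathcal{L}_1$ and the adjoint operator $\mathcal{L}_{\mathrm{int}}(1,2)$ of (\ref{dLint}), so no regularity of $F_1^{0}$ is ever required --- and then applies the product formula (\ref{Pr}) to pass to (\ref{d_funk-dn}) and the explicit weak identity (\ref{d_funk-gNr}). Your density route can likely be completed, but it is not free of cost: because the equation is nonlinear, ``continuity of the operators'' must mean continuity of the multilinear series (\ref{ske}) in the initial datum and of the collision functional in $F_1(t)$ on the ball where those series converge, uniformly on compact time intervals, and this has to be extracted from the same cumulant estimates; the paper's direct dual computation sidesteps that extra layer by placing all the regularity on the test function rather than on the data. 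Apart from this one point, your account of the convergence bookkeeping, the initial condition at $t=0$, and the strong-solution step for $F_1^{0}\in L^{1}_{0}$ coincides with the paper's proof.
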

\begin{proof}
Let $F_1^0\in L^{1}_{0}(\mathbb{R}^3\times\mathbb{R}^3)$. Series (\ref{ske}) converges in the norm
of the space $L^{1}(\mathbb{R}^3\times\mathbb{R}^3)$ provided that (\ref{estCon}). We use the result
of the differentiation over the time variable of expression (\ref{ske}) in the sense of pointwise
convergence on the space $L^{1}(\mathbb{R}^3\times\mathbb{R}^3)$ from section 3.3. Then for
$F_1^0\in L^{1}_{0}(\mathbb{R}^3\times\mathbb{R}^3)$ it holds that
\begin{eqnarray}\label{sts}
  &&\lim_{\Delta t\rightarrow 0}\int_{\mathbb{R}^3\times\mathbb{R}^3}dx_1
     \big|\frac{1}{\Delta t}\big(F_{1}(t+\Delta t,x_1)-F_{1}(t,x_1)\big)-\\
  &&-\big(-\langle p_1,\frac{\partial}{\partial q_1}\rangle F_{1}(t,x_1)
     +\sum\limits_{n=0}^{\infty}\frac{1}{n!}\int_{\mathbb{R}^3\times\mathbb{R}^3}dx_2
     \mathcal{L}^*_{\mathrm{int}}(1,2)\times \nonumber \\
  &&\times\int_{(\mathbb{R}^{3}\times\mathbb{R}^{3})^{n}}dx_3\ldots dx_{n+2}
     \mathfrak{V}_{1+n}\big(t,\{1,2\},3,\ldots,n+2\big)\prod_{i=1}^{n+2}F_{1}(t,x_i)\big)\big|=0,\nonumber
\end{eqnarray}
where the operator $\mathcal{L}^*_{\mathrm{int}}(1,2)$ is defined by formula (\ref{aLint}) and abridged
notations are used: the symbols $F_{1}(t,x_1)$ and $\prod _{i=1}^{n+2}F_{1}(t,x_i)$ are
implied series (\ref{ske}) and for $s=2$ series (\ref{Pr}), respectively.

The validity of equality (\ref{sts}) means that a strong solution of initial-value problem (\ref{gke1})-(\ref{2})
is given by series (\ref{ske}).

Let us prove that in case of initial data $F_1^{0}\in L^{1}(\mathbb{R}^3\times\mathbb{R}^3)$
a weak solution of the initial-value problem of the generalized Enskog equation (\ref{gke1})
is represented by series (\ref{ske}). With this purpose we introduce the functional
\begin{eqnarray}\label{funcc}
  &&(\varphi_{1},F_{1}(t))\doteq\int_{\mathbb{R}^3\times\mathbb{R}^3}dx_1\varphi_{1}(x_1)F_{1}(t,x_1),
\end{eqnarray}
where $\varphi_{1}\in L_0(\mathbb{R}^3\times\mathbb{R}^3)$
is continuously differentiable bounded function with compact support and the function $F_{1}(t)$ is
defined by series (\ref{ske}) which is convergent series in the norm of the space
$L^{1}(\mathbb{R}^3\times\mathbb{R}^3)$  under condition (\ref{estCon}). This functional exists since
the function $F_1(t)$ is integrable and $\varphi_1$ is a bounded function.

Using expansion (\ref{ske}), we transform  functional (\ref{funcc}) as follows
\begin{eqnarray*}\label{funk-gN}
  &&(\varphi_{1},F_{1}(t))=\\
  &&=\sum\limits_{n=0}^{\infty}\frac{1}{n!}\,
      \int_{(\mathbb{R}^3\times\mathbb{R}^3)^{n+1}}\,dx_1\ldots dx_{n+1}\varphi_{1}(x_1)
      \mathfrak{A}_{1+n}(-t,1,\ldots,n+1)\prod_{i=1}^{n+1}F_1^0(x_i)
      \mathcal{X}_{\mathbb{R}^{3(1+n)}\setminus \mathbb{W}_{1+n}}=\\
   &&=\sum\limits_{n=0}^{\infty}\frac{1}{n!}\int_{(\mathbb{R}^3\times\mathbb{R}^3)^{n+1}}
      dx_1\ldots dx_{n+1}\mathfrak{A}_{1+n}(t,1,\ldots,n+1)\varphi_{1}(x_1)
      \prod_{i=1}^{n+1}F_{1}^0(x_i)\mathcal{X}_{\mathbb{R}^{3(1+n)}\setminus\mathbb{W}_{1+n}},
\end{eqnarray*}
where the cumulant $\mathfrak{A}_{1+n}(t),\,n\geq0$ is adjoint to the cumulant $\mathfrak{A}_{1+n}(-t),\,n\geq0,$
in the sense of functional (\ref{funcc}). For $F_1^{0}\in L^{1}(\mathbb{R}^3\times\mathbb{R}^3)$ and
$\varphi_{1}\in L_0(\mathbb{R}^3\times\mathbb{R}^3)$ in the sense of the $\ast$-weak convergence on the space
$L(\mathbb{R}^3\times\mathbb{R}^3)$ the following equality holds
\begin{eqnarray}\label{d_funk-d}
  &&\lim\limits_{\Delta t\rightarrow0}\sum\limits_{n=0}^{\infty}\frac{1}{n!}\,
     \int_{(\mathbb{R}^3\times\mathbb{R}^3)^{n+1}}\,dx_1\ldots dx_{n+1}
     \frac{1}{\Delta t}\big(\mathfrak{A}_{1+n}(t+\Delta t)-\\
  &&-\mathfrak{A}_{1+n}(t)\big)\varphi_{1}(x_1)\prod_{i=1}^{n+1}F_1^0(x_i)
     \mathcal{X}_{\mathbb{R}^{3(1+n)}\setminus \mathbb{W}_{1+n}}=\nonumber
\end{eqnarray}
\begin{eqnarray*}  
  &&=(\mathcal{L}_1\varphi_{1},F_{1}(t))+
     \sum\limits_{n=1}^{\infty}\frac{1}{n!}\int_{(\mathbb{R}^3\times\mathbb{R}^3)^{n+2}}dx_1\ldots dx_{n+2}
     \mathcal{L}_{\mathrm{int}}(1,2)\varphi_{1}(x_1)\times\nonumber\\
  &&\times\mathfrak{A}_{1+k}(-t,\{1,2\},3,\ldots,n+2)\prod_{i=1}^{n+2}F_{1}^0(x_i)
     \mathcal{X}_{\mathbb{R}^{3(n+2)}\setminus \mathbb{W}_{n+2}}.\nonumber
\end{eqnarray*}
In obtained expression (\ref{d_funk-d}) the operator $\mathcal{L}_{\mathrm{int}}(1,2)$ is adjoint to the operator
$\mathcal{L}^{*}_{\mathrm{int}}(1,2)$ and it is defined on $L_0(\mathbb{R}^{6}\times(\mathbb{R}^{6}\setminus \mathbb{W}_2))$
by the formula
\begin{eqnarray}\label{dLint}
  &&\hskip-5mm \int_{(\mathbb{R}^3\times\mathbb{R}^3)^2}dx_1dx_2
     \mathcal{L}_{\mathrm{int}}(1,2)\varphi_{2}(x_1,x_2)\doteq
     \sigma^{2}\int_{\mathbb{R}^3\times\mathbb{R}^3}dx_1\int_{\mathbb{R}^3\times\mathbb{S}_{+}^{2}}d p_2
     d\eta\langle\eta,(p_1-p_{2})\rangle \times\\ \nonumber\\
  &&\hskip-5mm \times\big(\varphi_{2}(q_1,p_1^*,q_1+\sigma\eta,p_{2}^*)-\varphi_{2}(q_1,p_1,q_1+\sigma\eta,p_{2})\big).\nonumber
\end{eqnarray}
For $F_1^{0}\in L^{1}(\mathbb{R}^3\times \mathbb{R}^3)$ and
$\varphi_{1}\in L_0(\mathbb{R}^3\times\mathbb{R}^3)$  the limit functionals (\ref{d_funk-d}) exist.
Using equality (\ref{Pr}), we transform the second functional in (\ref{d_funk-d}) to the form
\begin{eqnarray}\label{d_funk-dn}
  &&\sum\limits_{n=1}^{\infty}\frac{1}{n!}
     \int_{(\mathbb{R}^3\times\mathbb{R}^3)^{n+2}}dx_1\ldots dx_{n+2}\mathcal{L}_{\mathrm{int}}(1,2)
     \varphi_{1}(x_1)\mathfrak{A}_{1+k}(-t,\{1,2\},3,\ldots,n+2)\times  \\
  &&\times\prod _{i=1}^{n+2}F_{1}^0(x_i)\mathcal{X}_{\mathbb{R}^{3(n+2)}\setminus \mathbb{W}_{n+2}}
     =\sum\limits_{n=0}^{\infty}\frac{1}{n!}\int_{(\mathbb{R}^3\times\mathbb{R}^3)^{2}}dx_1 dx_2
     \mathcal{L}_{\mathrm{int}}(1,2)\varphi_1(x_1)\times\nonumber \\
  &&\times\int_{(\mathbb{R}^3\times\mathbb{R}^3)^{n}}dx_3\ldots dx_{n+2}
     \mathfrak{V}_{1+n}\big(t,\{1,2\},3,\ldots,n+2\big)\prod_{i=1}^{n+2} F_{1}(t,x_i).\nonumber
\end{eqnarray}

Therefore, in consequence of equalities (\ref{d_funk-d}),(\ref{d_funk-dn}), for functional
(\ref{funcc}) we obtain the equality
\begin{eqnarray*}
  &&\hskip-7mm\frac{d}{dt}(\varphi_{1},F_{1}(t))=(\mathcal{L}_1\varphi_{1},F_{1}(t))+\\
  &&\hskip-7mm+\int_{(\mathbb{R}^{3}\times\mathbb{R}^{3})^2}dx_1 dx_2\mathcal{L}_{\mathrm{int}}(1,2)\varphi_{1}(x_1)
    \sum\limits_{n=0}^{\infty}\frac{1}{n!}\int_{(\mathbb{R}^3\times\mathbb{R}^3)^{n}}
    dx_3\ldots dx_{n+2}\mathfrak{V}_{1+n}(t)\prod_{i=1}^{n+2}F_{1}(t,i),\nonumber
\end{eqnarray*}
or in the explicit form
\begin{eqnarray}\label{d_funk-gNr}
  &&\hskip-8mm\frac{d}{dt}(\varphi_{1},F_{1}(t))=\int_{\mathbb{R}^{3}\times\mathbb{R}^{3}}dx_1
     \langle p_1,\frac{\partial}{\partial q_1}\rangle\varphi_{1}(x_1)F_{1}(t,x_1)+\\
  &&\hskip-8mm+\sigma^{2}\int_{\mathbb{R}^{3}\times\mathbb{R}^{3}}dx_1
     \int_{\mathbb{R}^{3}\times\mathbb{S}_+^2}d p_2d\eta \langle\eta,p_1-p_2\rangle(\varphi_1(q_1,p_1^*)-\varphi_1(q_1,p_1))
     \sum\limits_{n=0}^{\infty}\frac{1}{n!}\int_{(\mathbb{R}^{3}\times\mathbb{R}^{3})^{n}}dx_3\ldots\nonumber\\
  &&\hskip-8mm\ldots dx_{n+2}\mathfrak{V}_{1+n}(t,\{1,2_{+}\},3,\ldots,n+2)F_{1}(t,x_1)F_{1}(t,q_1+\sigma\eta,p_2)
      \prod_{i=3}^{n+2}F_{1}(t,x_i).\nonumber
\end{eqnarray}
This equality (\ref{d_funk-gNr}) means that series (\ref{ske}) for arbitrary initial data
$F_1^{0}\in L^{1}(\mathbb{R}^3\times \mathbb{R}^3)$ is a weak solution of the Cauchy problem
of the generalized Enskog equation (\ref{gke1})-(\ref{2}).

\end{proof}

\subsection{A weak solution in the extended sense}
For the Cauchy problem (\ref{gke1})-(\ref{2}) we introduced the notion of a weak solution
in the extended sense. Let $\big(\varphi,F(t\mid F_{1}(t))\big)$ be the functional
\begin{eqnarray}\label{func-g}
  &&\hskip-8mm\big(\varphi,F(t\mid F_{1}(t))\big)\doteq \sum_{s=0}^{\infty}\frac{1}{s!}
      \int_{(\mathbb{R}^3\times\mathbb{R}^3)^{s}}dx_1\ldots dx_{s}
      \varphi_{s}(x_1,\ldots,x_s)F_{s}(t,x_1,\ldots,x_s\mid F_{1}(t)),
\end{eqnarray}
where $\varphi=(\varphi_0,\varphi_1(x_1),\ldots,\varphi_s(x_1,\ldots,x_s),\ldots)$ is a finite
sequence of bounded infinitely times differentiable functions with compact supports
$\varphi_s\in L_{0}(\mathbb{R}^{3s}\times(\mathbb{R}^{3s}\setminus \mathbb{W}_s))\in
L(\mathbb{R}^{3s}\times(\mathbb{R}^{3s}\setminus \mathbb{W}_s)),\,s\geq0$,
and elements of the sequence $F(t,\mid F_{1}(t))\doteq\big(1,F_{1}(t,x_1),F_{2}(t,x_1,x_2\mid F_{1}(t)),
\ldots,$ $F_{s}(t,x_1,\ldots,x_s \mid F_{1}(t)),\ldots\big)$ are defined by formulas (\ref{ske})
and (\ref{f}) for the first and other elements respectively. Owing to estimates (\ref{estFunc12})
and (\ref{inequalF}), this functional exists provided that:
$\|F_1^0\|_{L^{1}(\mathbb{R}^{3}\times\mathbb{R}^{3})}<e^{-(3s+4)}(1+e^{-(3s+3)})^{-1}$.

On functions $\varphi_s\in L_{0}(\mathbb{R}^{3s}\times(\mathbb{R}^{3s}\setminus \mathbb{W}_s)),\,s\geq1$,
it is defined the operator ${\mathcal{B}}^{+}$ which is adjoint to the generator of the BBGKY hierarchy
of hard spheres (\ref{NelBog1}) for $t>0$,
\begin{eqnarray*}\label{wg}
  &&({\mathcal{B}}^{+}\varphi)_{s}(x_1,\ldots,x_s)\doteq \mathcal{L}_{s}(Y)\varphi_{s}(x_1,\ldots,x_s)+
       \sum_{j_1\neq j_{2}=1}^s\mathcal{L}_{\mathrm{int}}(j_1,j_{2})
       \varphi_{s-1}((x_1,\ldots,x_s)\backslash(x_{j_1})),
\end{eqnarray*}
where the operator $\mathcal{L}_{\mathrm{int}}(j_1,j_{2})$ is defined by (\ref{dLint}) and the Liouville
operator $\mathcal{L}_s(Y)$ is adjoint to operator (\ref{OperL}) in the sense of functional (\ref{func-g})
\cite{CGP97}.

It is said to be that expansion (\ref{ske}) is a weak solution of the Cauchy problem (\ref{gke1})-(\ref{2})
in the extended sense, if for functional (\ref{func-g}) the following equality is true
\begin{eqnarray}\label{w}
  &&\frac{d}{dt}\big(\varphi,F(t\mid F_{1}(t))\big)=\big({\mathcal{B}}^{+}\varphi,F(t\mid F_{1}(t))\big).
\end{eqnarray}
In view of estimates (\ref{inequalF}) the functional on the right-hand side of this equation exists
provided that (\ref{estFunc1}).

To verify this definition we transform functional (\ref{func-g}) as follows \cite{GerUJP}
\begin{eqnarray*}
  &&\hskip-9mm\big(\varphi,F(t\mid F_{1}(t))\big)= \\
  &&\hskip-9mm=\sum_{s=0}^{\infty}\,\frac{1}{s!}\int_{(\mathbb{R}^3\times\mathbb{R}^3)^{s}}dx_1\ldots dx_{s}
     \sum_{n=0}^s\,\frac{1}{(s-n)!}\sum_{j_1\neq\ldots\neq j_{s-n}=1}^s\,
     \sum\limits_{Z\subset Y\backslash (j_1,\ldots,j_{s-n})}(-1)^{|Y\backslash (j_1,\ldots,j_{s-n})\backslash Z|}\times\\
  &&\hskip-9mm\times S_{s-n+|Z|}(t,(j_1,\ldots,j_{s-n})\cup Z )\varphi_{s-n}(x_{j_1},\ldots,x_{j_{s-n}})
     \prod_{i=1}^{s}F_{1}^0(x_i)\mathcal{X}_{\mathbb{R}^{3s}\setminus\mathbb{W}_{s}},
\end{eqnarray*}
where ${\sum\limits}_{Z\subset Y\backslash (j_1,\ldots,j_{s-n})}$ is a sum over all subsets
$Z\subset Y\backslash (j_1,\ldots,j_{s-n})$ of the set $Y\backslash (j_1,\ldots,j_{s-n})\subset Y\equiv(1,\ldots,s)$.
For $F_1^{0}\in L^{1}(\mathbb{R}^{3}\times\mathbb{R}^{3})$ and
$\varphi_s\in L_{0}(\mathbb{R}^{3s}\times(\mathbb{R}^{3s}\setminus \mathbb{W}_s)),\,s\geq1$, this functional exists.

Skipping the details, as a result for $\varphi_s\in L_{0}(\mathbb{R}^{3s}\times(\mathbb{R}^{3s}\setminus \mathbb{W}_s)),\,s\geq1$,
the derivative of functional (\ref{func-g}) over the time variable in the sense of the $\ast$-weak convergence
on the space of sequences of bounded functions takes the form
\begin{eqnarray}\label{d_funk-gN}
  &&\frac{d}{dt}\big(\varphi,F(t\mid F_{1}(t))\big)=
     \sum\limits_{s=0}^{\infty}\frac{1}{s!}\int_{(\mathbb{R}^3\times\mathbb{R}^3)^{s}}dx_1\ldots dx_{s}
     \big(\mathcal{L}_{s}(Y)\varphi_{s}(x_1,\ldots,x_{s})+\\
  &&+\sum_{j_1\neq j_{2}=1}^s\mathcal{L}_{\mathrm{int}}(j_1,j_{2})
     \varphi_{s-1}((x_1,\ldots,x_{s})\backslash (x_{j_1}))\big)
     F_{s}(t,x_1,\ldots,x_{s}\mid F_{1}(t)).\nonumber
\end{eqnarray}
According to the notion of a weak solution in the extended sense (\ref{w}), equality
(\ref{d_funk-gN}) means that for arbitrary initial data $F_1^{0}\in L^{1}(\mathbb{R}^{3}\times\mathbb{R}^{3})$
a weak solution of initial-value problem of the generalized Enskog equation (\ref{gke1}) is determined
by series (\ref{ske}). In fact, we have also proved that marginal functionals of the state (\ref{f})
are weak solutions of the BBGKY hierarchy of hard spheres.

%%%%%%%%%%%%%%%%%%%%%%%%%%%%%%%%%%%%%%%%%%%%%%%%%%%%%%%%%%%%%%%%%%%%%%%%%%%%%%%%%%%%%%%%%%%%%%%%%%%%%%%%%%%%%%%%%%%%%%%%%

\section{Conclusion}
Thus, in case of initial data (\ref{eq:Bog2_haos}) solution (\ref{F(t)}) of the Cauchy problem
(\ref{NelBog1})-(\ref{eq:NelBog2}) of the BBGKY hierarchy for hard spheres and a solution of the
Cauchy problem of the generalized Enskog equation (\ref{gke1})-(\ref{2}) together with marginal
functionals of the state (\ref{f}) give two equivalent approaches to the description of the evolution
of a hard sphere system. We emphasize that in addition, the coefficients of the generalized Enskog
equation (\ref{gke1}) and marginal functionals of the state (\ref{f}) are determined by the operators
of initial correlations specified by the forbidden configurations of hard spheres.

The structure of the collision integral expansion of the generalized Enskog equation (\ref{gke1}) is such
that the first term of this expansion is the Boltzman-Enskog collision integral and the next terms describe
all possible correlations which are created by hard sphere dynamics and by the propagation of initial
correlations connected with the forbidden configurations.

On the basis of the derived generalized Enskog equation (\ref{gke1}) we state the Markovian generalized
Enskog equation (\ref{MarkGke1}). In case of absence of correlations which generated by hard sphere
dynamics the collision integral of this kinetic equation transforms to the collision integral of the
revised Enskog equation \cite{BLPT}. Therefore the kinetic evolution of hard spheres described by the
revised Enskog equation makes it possible to take into consideration only the initial correlations
specified by the forbidden configurations of particles.

In the end it should be emphasized that the kinetic evolution is an inherent property of infinite-particle
systems. In spite of the fact that in terms of a one-particle marginal distribution function from the space
of integrable functions can be described a hard sphere system with the finite average number of particles,
the Enskog kinetic equation has been derived on the basis of the formalism of nonequilibrium grand canonical
ensemble since its framework is adopted to the description of infinite-particle systems in suitable functional
spaces \cite{CGP97} as well.

We remark also that developed approach is also related to the problem of a rigorous derivation of the
non-Markovian kinetic-type equations from underlaying many-particle dynamics which make possible to describe
the memory effects of particle and energy transport, for example, the anomalous transport in turbulent plasma,
the Brownian motion of macroparticles in complex fluids.

%%%%%%%%%%%%%%%%%%%%%%%%%%%%%%%%%%%%%%%%%%%%%%%%%%%%%%%%%%%%%%%%%%%%%%%%%%%%%%%%%%%%%%%%%%%%%%%%%%%%%%%%%%%%%%%%%%%%%%%%%%%%%%%%%%%%

\addcontentsline{toc}{section}{References}
\renewcommand{\refname}{References}

\end{document}